\documentclass[letterpaper, 10 pt, conference]{ieeeconf}  
\IEEEoverridecommandlockouts                              
\overrideIEEEmargins                                      

\usepackage[utf8]{inputenc}
\usepackage{amsmath}
\usepackage{amsthm}
\usepackage{mathrsfs}
\usepackage{amssymb}
\usepackage{color}
\usepackage{cite} 
\usepackage[font=footnotesize]{caption}
\usepackage{comment}
\usepackage{float}
\usepackage{graphicx}
\usepackage{float}
\usepackage{lscape}
\usepackage{bm}
\usepackage{booktabs}
\usepackage{longtable,tabularx,ragged2e}
\usepackage[super]{nth}
\usepackage{multicol}
\usepackage{booktabs}
\usepackage{lscape}
\usepackage{nicefrac}
\usepackage[hidelinks]{hyperref}
\usepackage{mathtools}
\usepackage{siunitx}


\newtheorem{theo}{Theorem}
\newtheorem{prop}{Proposition}
\newtheorem{cor}{Corollary}
\newtheorem{defn}{Definition}

\newtheorem{lem}{Lemma}

\DeclareMathOperator{\rank}{rank}
\DeclareMathOperator{\tr}{tr}

\newcommand{\eigmin}{\boldsymbol{\lambda_{\rm min}}}

\newcommand{\rmT}{{\rm T}}

\newcommand{\BBN}{{\mathbb N}}

\newcommand{\BBR}{{\mathbb R}}

\newcommand{\SF}{{\mathcal F}}
\newcommand{\SG}{{\mathcal G}}

\newcommand{\SU}{{\mathcal U}}

\newcommand{\SY}{{\mathcal Y}}

\allowdisplaybreaks

\newcounter{example}
\newenvironment{example}[1][]{\refstepcounter{example}\par\medskip
   \noindent \textbf{\indent Example~\theexample. #1} \rmfamily}{\medskip}
   
\DeclareMathOperator*{\argmin}{arg\,min}

\setlength\intextsep{2mm}


\newenvironment{lema}[1]{
  
  \lemalt
}{\endlemalt}

\title{Convergence of Recursive Least Squares Based Input/Output System Identification with Model Order Mismatch}

\author{\large Brian Lai and Dennis S. Bernstein 
\thanks{Brian Lai and Dennis S. Bernstein are with the Department of Aerospace Engineering, University of Michigan, Ann Arbor, MI, USA. {\tt \{brianlai,  dsbaero\}@umich.edu}. This work was supported by the NSF Graduate Research Fellowship under Grant No. DGE 1841052.}
}

\begin{document}
\maketitle

\begin{abstract}
    Discrete-time input/output models, also called infinite impulse response (IIR) models or autoregressive moving average (ARMA) models, are useful for online identification as they can be efficiently updated using recursive least squares (RLS) as new data is collected. 
    Several works have studied the convergence of the input/output model coefficients identified using RLS under the assumption that the order of the identified model is the same as that of the true system.
    However, the case of model order mismatch is not as well addressed.
    This work begins by introducing the notion of \textit{equivalence} of input/output models of different orders.
    Next, this work analyzes online identification of input/output models in the case where the order of the identified model is higher than that of the true system.
    It is shown that, given persistently exciting data, the higher-order identified model converges to the model equivalent to the true system that minimizes the regularization term of RLS.
\end{abstract}

\section{Introduction}

Least squares based methods are fundamental and widely used in identification, signal processing, and control \cite{aastrom1995adaptive,ljung1983theory}.
One useful application is online identification of a linear model for adaptive control \cite[ch. 3]{goodwin2014adaptive}.
A particular model structure that lends itself to online identification is a discrete-time input/output model \cite{willems1986time}, also called an infinite-impulse-response (IIR) model \cite{cuevas2014comparison} or autoregressive moving-average (ARMA) model \cite[p. 32]{goodwin2014adaptive}.
An advantage of discrete-time input/output models is they can be efficiently updated in real time as new data is collected \cite[ch. 3.2]{goodwin2014adaptive}.
Note that in the single-input-single-output (SISO) case, the coefficients of a discrete-time input/output model directly give the discrete-time transfer function.
However, the multi-input-multi-output (MIMO) case is considerably more complex \cite{willems1986time,willems1997introduction} and converting between input/output models and other linear model structures is nontrivial \cite{shieh1982transformations,al2002identification}.
As such, it may be beneficial to directly study the online identification of MIMO input/output models without converting to another model structure.

This paper focuses on the online identification of input/output models using recursive least squares (RLS), which has been used in adaptive model predictive control \cite{nguyen2021predictive} and retrospective cost adaptive control \cite{islam2021data} with various applications \cite{lai2023data,mohseni2022adaptive,farahmandi2024predictive}.
While related works have discussed convergence of model coefficients when identifying input/output models using RLS \cite[ch. 3.4]{goodwin2014adaptive}, \cite{ding2009multiinnovation}, these results assume the order of the true model is known.
A natural question is whether similar guarantees can be made if the order of the identified model and true system do not match.
This work will show that if the order of the identified model is higher than order of the true model, then the regressor of RLS is not persistently exciting, and hence standard convergence guarantees of RLS \cite{bruce2021necessary,lai2021regularization,goodwin2014adaptive} do not apply.
The main contribution of this work is developing new analysis of the case where the order of the identified input/output model is higher than that of the true input/output system.

This paper is organized as follows. 
Section \ref{sec: IO Modeling} introduces discrete-time input/output models and a useful output transition equation.
Next, Section \ref{sec: equivalence and reduciblity} introduces the notion of \textit{equivalence} of input/output models as models giving the same outputs under the same inputs and initial conditions. 
It is shown that a necessary and sufficient condition for equivalence can be written as a linear equation of the model coefficients.
This section also introduces the notion of \textit{reducibility} as the existence of a lower-order, equivalent input/output model.
Finally, section \ref{sec: online ID of IO Models} discusses the online identification of input/output models using RLS. 
It is first shown that, in the case where the order of the identified model is the same as that of the true system, persistent excitation conditions guarantee global asymptotic stability of the coefficient estimation error.
Next, in the case where the order of the identified model is higher than that of the true system, conditions are given under which the identified model converges to the higher-order model equivalent to the true system that minimizes the regularization term of RLS.

\section{Input/Output Modeling}
\label{sec: IO Modeling}

Let $k_0 \in \BBN$ be the initial time step and let $n \ge 0$ be the model order. Consider the input/output model where, for all $k \ge k_0$, $u_k \in \BBR^m$ is the input, $y_{k_0},\hdots,y_{k_0+n-1} \in \BBR^{p}$ are the initial conditions and, for all $k \ge k_0$, the output $y_{k+n} \in \BBR^p$ is given by
\begin{align}
\label{eqn: IO Model}
    y_{k+n} = -\sum_{i=1}^n F_i y_{k+n-i} + \sum_{i=0}^n G_i u_{k+n-i},
\end{align}
where $F_1,\hdots,F_n \in \BBR^{p \times p}$, and $G_0,\hdots,G_n \in \BBR^{p \times m}$ are the input/output model coefficients.
It follows that, for all $k \ge k_0$,
\begin{align}
\label{eqn: IO matrix form}
    y_{k+n} = -\SF_n \SY_{k,n} + \SG_n \SU_{k,n},
\end{align}
where $\SF_n \in \BBR^{p \times pn}$, $\SY_{k,n} \in \BBR^{pn}$, $\SG_n \in \BBR^{p \times m(n+1)}$, and $\SU_{k,n} \in \BBR^{m(n+1)}$ are defined as
\begin{align}
    \SF_n &\triangleq  \begin{bmatrix}
        F_1 & \cdots & F_n
    \end{bmatrix}, 
    \quad 
    \SY_{k,n} \triangleq \begin{bmatrix}
        y_{k+n-1} \\ \vdots \\ y_k
    \end{bmatrix},
    \label{eqn: SF and SY definition}
    \\
    \SG_n &\triangleq \begin{bmatrix}
        G_0 & \cdots & G_n
    \end{bmatrix},
    \quad 
    \SU_{k,n} \triangleq  \begin{bmatrix}
        u_{k+n} \\ \vdots \\ u_k
    \end{bmatrix}.
    \label{eqn: SG and SU definition}
\end{align}
%
%
Proposition \ref{prop: IO j step update} shows that, for all $k \ge k_0$ and $j \ge 0$, $y_{k+n+j}$ can be written as a linear combination of the inputs $u_{k},\hdots,u_{k+n+j-1}$ and the $n$ outputs $y_{k},\hdots,y_{k+n-1}$.
We call this the output transition equation.

\begin{prop}
\label{prop: IO j step update}
Let $k_0 \in \BBN$. For all $k \ge k_0$, let $u_k \in \BBR^m$, let $y_{k_0},\hdots,y_{k_0+n-1} \in \BBR^p$, and, for all $k \ge k_0$, let $y_{k+n} \in \BBR^p$ be given by \eqref{eqn: IO Model}. Then, for all $k \ge k_0$ and $j \ge 0$,
\begin{align}
\label{eqn: IO j step update}
    y_{k+n+j} = -\SF_{n,j} \SY_{k,n} + \SG_{n,j} \SU_{k,n+j},
\end{align}
where $\SF_{n,0} \triangleq \SF_{n}$, $\SG_{n,0} \triangleq \SG_{n}$, and, for all $j \ge 1$, $\SF_{n,j} \in \BBR^{p \times pn}$ and $\SG_{n,j} \in \BBR^{p \times m(n+1+j)}$ are defined
\begin{align}
    \SF_{n,j} &\triangleq \mathcal{S}(\SF_{n,0},j) - \sum_{i=1}^{\mathclap{\min\{j,n\}}} F_i \SF_{n,j-i} , \label{eqn: SF_n,j defn}
    \\
    \SG_{n,j} &\triangleq \begin{bmatrix} \SG_{n,0} & 0_{p \times jm} \end{bmatrix} - \sum_{i=1}^{\mathclap{\min\{j,n\}}} F_i \begin{bmatrix} 0_{p \times im} & \SG_{n,j-i} \end{bmatrix}, \label{eqn: SG_n,j defn}
\end{align}
and where $\mathcal{S}(\SF_{n,0},j) \in \BBR^{p \times pn}$ is the $j$ step shift of $\SF_{n,0}$, defined as 
\begin{equation}
    \mathcal{S}(\SF_{n,0},j) \triangleq \begin{cases}
        \begin{bmatrix}
            F_{j+1} & \cdots & F_n & 0_{p \times jp} 
        \end{bmatrix}
        & j \le n-1,
        \\
        0_{p \times np} & j \ge n.
    \end{cases}
\end{equation}
\end{prop}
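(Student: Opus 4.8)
The plan is to prove \eqref{eqn: IO j step update} by induction on $j$, with the base case $j=0$ being immediate from \eqref{eqn: IO matrix form} together with the definitions $\SF_{n,0} \triangleq \SF_n$, $\SG_{n,0} \triangleq \SG_n$, and $\SU_{k,n+0} = \SU_{k,n}$. For the inductive step, I would assume that \eqref{eqn: IO j step update} holds for all indices $0,1,\hdots,j-1$ (for every $k \ge k_0$), and derive it for $j$. The natural starting point is to apply the defining recursion \eqref{eqn: IO Model} one step past the current horizon: writing $\ell = k+j$, we have $y_{k+n+j} = y_{\ell+n} = -\sum_{i=1}^n F_i y_{\ell+n-i} + \sum_{i=0}^n G_i u_{\ell+n-i}$. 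Each term $y_{\ell+n-i} = y_{k+n+(j-i)}$ with $1 \le i \le n$ then needs to be re-expressed in terms of $\SY_{k,n}$ and the inputs.

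The delicate point is that the inductive hypothesis in the stated form only applies to $y_{k+n+(j-i)}$ when $j-i \ge 0$, i.e. when $i \le j$; this is exactly why the sums in \eqref{eqn: SF_n,j defn} and \eqref{eqn: SG_n,j defn} are truncated at $\min\{j,n\}$ rather than $n$. So I would split the sum $\sum_{i=1}^n F_i y_{k+n+(j-i)}$ into the range $1 \le i \le \min\{j,n\}$, where the inductive hypothesis gives $y_{k+n+(j-i)} = -\SF_{n,j-i}\SY_{k,n} + \SG_{n,j-i}\SU_{k,n+j-i}$, and the range $\min\{j,n\} < i \le n$ (nonempty only when $j < n$), where $j - i < 0$ and the output $y_{k+n+j-i}$ is instead one of the original components $y_{k+n-1},\hdots,y_k$ of $\SY_{k,n}$; these leftover output terms are precisely what the shift matrix $\mathcal{S}(\SF_{n,0},j)$ bookkeeps, since $-F_i y_{k+n+j-i}$ for $i$ in that range, summed with the "free" term $-\SF_n \SY_{k,n}$ shifted appropriately, collapses into $-\mathcal{S}(\SF_{n,0},j)\SY_{k,n}$. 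I expect verifying this collapse — that the shift definition exactly matches the residual output terms — to be the main obstacle, mostly a matter of careful index arithmetic.

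It then remains to collect the input contributions. The "free" term $\sum_{i=0}^n G_i u_{k+n+j-i} = \SG_n \SU_{\ell,n} = \begin{bmatrix} \SG_{n,0} & 0_{p\times jm}\end{bmatrix}\SU_{k,n+j}$ after zero-padding to align indices, and each inductive term contributes $-F_i \SG_{n,j-i}\SU_{k,n+j-i}$, which becomes $-F_i \begin{bmatrix} 0_{p\times im} & \SG_{n,j-i}\end{bmatrix}\SU_{k,n+j}$ once we zero-pad on the left to shift its action onto the longer regressor $\SU_{k,n+j}$. Summing over $1 \le i \le \min\{j,n\}$ reproduces exactly \eqref{eqn: SG_n,j defn}, and the output terms reproduce \eqref{eqn: SF_n,j defn}, completing the induction. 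Throughout I would keep careful track of the dimensions ($\SF_{n,j}\in\BBR^{p\times pn}$ is fixed, while $\SG_{n,j}\in\BBR^{p\times m(n+1+j)}$ grows with $j$) to make sure the zero-padding is consistent, and I would note that the case $n = 0$ (no $F_i$ terms) is covered degenerately since all the sums over $i$ are empty.
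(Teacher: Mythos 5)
Your proposal is correct and follows essentially the same route as the paper's proof: strong induction on $j$, base case from \eqref{eqn: IO matrix form}, splitting $\sum_{i=1}^n F_i y_{k+n+j-i}$ at $i=\min\{j,n\}$ so the inductive hypothesis applies to the first part while the residual terms $\sum_{i=\min\{j,n\}+1}^{n} F_i y_{k+n+j-i}$ collapse to $\mathcal{S}(\SF_{n,0},j)\SY_{k,n}$, with zero-padding of the input regressors exactly as in \eqref{eqn: SG_n,j defn}. The only quibble is your phrasing that the residual terms are ``summed with the free term $-\SF_n\SY_{k,n}$ shifted appropriately'': the shift matrix accounts for the residual output terms alone, with no additional contribution, but this does not affect the correctness of the argument.
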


\begin{proof}
    Proof follows by strong induction on $j \ge 0$. Note that for all $k \ge k_0$ and $j = 0$, \eqref{eqn: IO j step update} simplifies to \eqref{eqn: IO matrix form}. 
    Next, let $k \ge k_0$, let $j \ge 1$ and suppose that, for all $\hat{j} \le j -1$, \eqref{eqn: IO j step update} holds.
    Note that by \eqref{eqn: IO Model}, $y_{k+n+j}$ can be written as
    \begin{equation*}
        y_{k+n+j} = -\sum_{i=1}^n F_i y_{k+n+j-i} + \sum_{i=0}^n G_i u_{k+n+j-i}.
    \end{equation*}
    Next, for all $1\le i \le \min\{j,n\}$, it follows that $0 \le j-i \le j-1$ and by inductive hypothesis, $y_{k+n+j-i}$ can be written as
    \begin{align*}
        y_{k+n+j-i} &= -\SF_{n,j-i} \SY_{k,n} + \SG_{n,j-i} \SU_{k,n+j-i} \\
        &= -\SF_{n,j-i} \SY_{k,n} + \begin{bmatrix} 0_{p \times im} & \SG_{n,j-i} \end{bmatrix}  \SU_{k,n+j}.
    \end{align*}
    Furthermore, note that
    \begin{align*}
        \sum_{i=0}^n G_i u_{k+n+j-i} = \begin{bmatrix}
        \SG_{n,0} & 0_{p \times jm}
        \end{bmatrix} \SU_{k,n+j}.
    \end{align*}
    Hence, $y_{k+n+j}$ can be written as
    \begin{align*}
        y_{k+n-j} =&  \sum_{i=1}^{\mathclap{\min\{j,n\}}} F_i \left(\SF_{n,j-i} \SY_{k,n} - \begin{bmatrix} 0_{p \times im} & \SG_{n,j-i} \end{bmatrix}  \SU_{k,n+j}\right)
        \\ 
        & - \hspace{5pt}  \sum_{\mathclap{i = \min\{j,n\}+1}}^n \hspace{5pt} F_i y_{k+n+j-i} 
        + \begin{bmatrix}
        \SG_{n,0} & 0_{p \times jm}
        \end{bmatrix} \SU_{k,n+j} .
    \end{align*}
    Noting that $\sum_{i = \min\{j,n\}+1}^n F_i y_{k+n+j-i} = \mathcal{S}(\SF_{n,0},j) \SY_{k,n}$
    and combining terms yields \eqref{eqn: IO j step update}.
\end{proof}

\section{Equivalence and Reducibility of Input/Output Models}
\label{sec: equivalence and reduciblity}
Consider an input/output model of order $\hat{n} \ge n$ where, for all $k \ge k_0$, $u_k \in \BBR^m$ is the input, $\hat{y}_{k_0},\hdots,\hat{y}_{k_0+\hat{n}-1} \in \BBR^{p}$ are the initial conditions and, for all $k \ge k_0$, the output $\hat{y}_{k+\hat{n}} \in \BBR^p$ is given by
\begin{align}
\label{eqn: IO Model 2}
    \hat{y}_{k+\hat{n}} = -\sum_{i=1}^{\hat{n}} \hat{F}_i \hat{y}_{k+\hat{n}-i} + \sum_{i=0}^{\hat{n}} \hat{G}_i u_{k+\hat{n}-i},
\end{align}
where $\hat{F}_1,\hdots,\hat{F}_{\hat{n}} \in \BBR^{p \times p}$, and $\hat{G}_0,\hdots,\hat{G}_{\hat{n}} \in \BBR^{p \times m}$ are the input/output model coefficients.
Definition \ref{defn: equivalence of models} defines models \eqref{eqn: IO Model} and \eqref{eqn: IO Model 2} as \textit{equivalent} if they give the same outputs under the same inputs and initial conditions.
\begin{defn}
\label{defn: equivalence of models}
    Let $k_0 \in \BBN$.
    Consider input/output model \eqref{eqn: IO Model} with order $n$ and and input/output Model \eqref{eqn: IO Model 2} with order $\hat{n} \ge n$. 
    For all $k \ge k_0$, let $u_k \in \BBR^m$ be arbitrary. For all $k_0 \le k \le k_0 + n-1$, let $y_k \in \BBR^p$ be arbitrary and, for all $k \ge k_0+n$, let $y_k \in \BBR^p$ be given by \eqref{eqn: IO Model}. 
    Next, for all $k_0 \le k \le k_0+\hat{n}-1$, let $\hat{y}_k = y_k$ and, for all $k \ge k_0+\hat{n}$, let $\hat{y}_k \in \BBR^p$ be given by \eqref{eqn: IO Model 2}.
    Models \eqref{eqn: IO Model} and \eqref{eqn: IO Model 2} are \textbf{equivalent} if, for all $k \ge k_0+\hat{n}$, $\hat{y}_k = y_k$. 
\end{defn}

Proposition \ref{prop: equivalence of same order models} shows that two input/output models of the same order are equivalent if and only if they have the same model coefficients.
\begin{prop}
\label{prop: equivalence of same order models}
     Consider input/output models \eqref{eqn: IO Model} and \eqref{eqn: IO Model 2} with the same order $\hat{n} = n$. Models \eqref{eqn: IO Model} and \eqref{eqn: IO Model 2} are equivalent if and only if $\hat{G}_0 = G_0$ and, for all $1 \le i \le n$, $\hat{F}_i = F_i$ and $\hat{G}_i = G_i$.
\end{prop}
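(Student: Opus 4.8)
The plan is to prove the two implications separately. For the ``if'' direction I would assume $\hat G_0 = G_0$ and $\hat F_i = F_i$, $\hat G_i = G_i$ for all $1 \le i \le n$, so that the matrices $\hat\SF_n \triangleq \begin{bmatrix}\hat F_1 & \cdots & \hat F_n\end{bmatrix}$ and $\hat\SG_n \triangleq \begin{bmatrix}\hat G_0 & \cdots & \hat G_n\end{bmatrix}$ associated with model \eqref{eqn: IO Model 2} equal $\SF_n$ and $\SG_n$. Since $\hat n = n$, Definition~\ref{defn: equivalence of models} gives $\hat\SY_{k_0,n} = \SY_{k_0,n}$, and a routine strong induction on $k \ge k_0 + n$ using the matrix form \eqref{eqn: IO matrix form} then shows $\hat y_k = y_k$ for all $k \ge k_0 + n$, which is exactly equivalence.

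The substance is in the ``only if'' direction, where the key observation is that a single output comparison already determines all coefficients once the arbitrariness in Definition~\ref{defn: equivalence of models} is exploited. Assuming equivalence, I would write \eqref{eqn: IO matrix form} at $k = k_0$ for both models; the initial conditions agree, so $\hat\SY_{k_0,n} = \SY_{k_0,n}$, and the equality $\hat y_{k_0 + n} = y_{k_0 + n}$ (the case $k = k_0 + \hat n$ of equivalence) rearranges to
\begin{align*}
    (\SF_n - \hat\SF_n)\,\SY_{k_0,n} = (\SG_n - \hat\SG_n)\,\SU_{k_0,n}.
\end{align*}
By Definition~\ref{defn: equivalence of models} this holds for all arbitrary initial conditions $y_{k_0},\ldots,y_{k_0+n-1}$ and inputs $u_{k_0},\ldots,u_{k_0+n}$, hence for every $\SY_{k_0,n} \in \BBR^{pn}$ and every $\SU_{k_0,n} \in \BBR^{m(n+1)}$, chosen independently. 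Setting $\SU_{k_0,n} = 0$ and letting $\SY_{k_0,n}$ range over a basis of $\BBR^{pn}$ forces $\SF_n = \hat\SF_n$; setting $\SY_{k_0,n} = 0$ and letting $\SU_{k_0,n}$ range over a basis of $\BBR^{m(n+1)}$ forces $\SG_n = \hat\SG_n$. Reading off the block partitions in \eqref{eqn: SF and SY definition}--\eqref{eqn: SG and SU definition} yields $F_i = \hat F_i$ for $1 \le i \le n$ and $G_i = \hat G_i$ for $0 \le i \le n$.

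I do not expect a genuine obstacle; the argument is essentially bookkeeping. The one subtlety worth stating carefully is the logical structure of Definition~\ref{defn: equivalence of models} as used in the ``only if'' direction: the equivalence hypothesis quantifies over \emph{all} inputs and initial conditions, so the displayed relation is an identity of linear maps rather than a relation at a single data point, and $\SY_{k_0,n}$ and $\SU_{k_0,n}$ --- being built from disjoint sets of free variables --- may be varied independently. It is also worth noting the asymmetry between the directions: ``only if'' uses only the single step $k = k_0 + n$ of the equivalence condition, whereas ``if'' delivers the entire family $\hat y_k = y_k$, $k \ge k_0 + n$.
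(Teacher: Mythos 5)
Your proposal is correct and takes essentially the same route as the paper: both rest on comparing the single output $y_{k_0+n}$, written via \eqref{eqn: IO matrix form} as a linear function of the freely and independently chosen initial conditions and inputs, and isolating one block of $\SY_{k_0,n}$ or $\SU_{k_0,n}$ at a time. The only difference is presentational --- you argue the nontrivial direction directly by letting the data range over a basis, whereas the paper proves the contrapositive with explicit impulse-type test signals --- and this does not change the substance.
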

\begin{proof}
    Proof of necessity follow immediately. 
    To prove sufficiency, let $k_0 \in \BBN$. First consider that there exists $0 \le i \le n$ such that $\hat{G}_i \ne G_i$. Then, there exists $u \in \BBR^m$ such that $\hat{G}_i u \ne G_i u$.
    Next, consider $y_{k_0} = \cdots = y_{k_0+n-1} = 0$, for all $k \ge k_0$ such that $k \ne k_0+n-i$, $u_k = 0$, and $u_{k_0+n-i} = u$. Then, $\hat{y}_{k_0+n} = \hat{G}_i u$, $y_{k_0+n} = G_i u$, and $\hat{y}_{k_0 +n} \ne y_{k_0+n}$.

    Next, consider that there exists $1 \le i \le n$ such that $\hat{F}_i \ne F_i$. Similarly, there exists $y \in \BBR^p$ such that $\hat{F}_i y \ne F_i y$. Hence, if, for all $k_0 \le k \le k_0+n-1$, $k \ne k_0+n-i$, $y_k = 0$, $y_{k_0+n-i} = y$, and, for all $k \ge k_0$, $u_k = 0$, then $\hat{y}_{k_0+n} = \hat{F}_i y$, $y_{k_0+n} = F_i y$, and $\hat{y}_{k_0+n} \ne y_{k_0+n}$. 
    Thus, \eqref{eqn: IO Model} and \eqref{eqn: IO Model 2} are not equivalent and sufficiency is proven.
\end{proof}

We next consider the equivalence of input/output models of different orders. Note that, for all $k \ge k_0$,
\begin{align}
    \hat{y}_{k+\hat{n}} = -\hat{\SF}_{\hat{n}} \hat{\SY}_{k,\hat{n}} + \hat{\SG}_{\hat{n}} \SU_{k,\hat{n}},
\end{align}
where $\SU_{k,\hat{n}} \in \BBR^{m(\hat{n}+1)}$ is defined in \eqref{eqn: SG and SU definition} and $\hat{\SF}_{\hat{n}} \in \BBR^{p \times p\hat{n}}$, $\hat{\SY}_{k,\hat{n}} \in \BBR^{p\hat{n}}$, and $\hat{\SG}_{\hat{n}} \in \BBR^{p \times m(\hat{n}+1)}$ are defined as
\begin{align}
    \hat{\SF}_{\hat{n}} &\triangleq  \begin{bmatrix}
        \hat{F}_1 & \cdots & \hat{F}_{\hat{n}}
    \end{bmatrix} , \quad 
    \hat{\SY}_{k,\hat{n}} \triangleq \begin{bmatrix}
        \hat{y}_{k+\hat{n}-1} \\ \vdots \\ \hat{y}_k
    \end{bmatrix},
    \\
    \hat{\SG}_{\hat{n}} &\triangleq \begin{bmatrix}
        \hat{G}_0 & \cdots & \hat{G}_{\hat{n}}
    \end{bmatrix}.
\end{align}
Theorem \ref{theo: IO Equivalence condition} gives necessary and sufficient conditions for the equivalence of input/output models of different orders.
We begin with two useful Lemmas. 

\begin{lem}
\label{lem: IO 2 consolidation}
Let $k_0 \in \BBN$. Consider input/output model \eqref{eqn: IO Model} with order $n$ and and input/output Model \eqref{eqn: IO Model 2} with order $\hat{n} > n$. For all $k \ge k_0$, let $u_k \in \BBR^m$, for all $k_0 \le k \le k_0+n-1$, let $y_k \in \BBR^p$, and, for all $k \ge k_0+n$, let $y_k \in \BBR^p$ be given by \eqref{eqn: IO Model}. 
Let $k^* \ge k_0$ and, for all $k_0 \le k \le k^*+\hat{n}-1$, let $\hat{y}_k = y_k$.
Finally, let $\hat{y}_{k^*+\hat{n}} \in \BBR^p$ be given by \eqref{eqn: IO Model 2}. 
Then, $\hat{y}_{k^*+\hat{n}}$ can be expressed as
\begin{align}
    \label{eqn: IO 2 consolidation}
    \hat{y}_{k^*+\hat{n}} = -\hat{\SF}_{n,\hat{n}-n} \SY_{k^*,n} + \hat{\SG}_{n,\hat{n}-n}\SU_{k^*,\hat{n}},
\end{align}
where $\hat{\SF}_{n,\hat{n}-n} \in \BBR^{p \times pn}$ and $\hat{\SG}_{n,\hat{n}-n} \in \BBR^{p \times m (\hat{n}+1)}$ are defined
\begin{align}
    \hat{\SF}_{n,\hat{n}-n} &\triangleq 
    \begin{bmatrix}
        \hat{F}_{\hat{n}-n+1} & \cdots & \hat{F}_{\hat{n}}
    \end{bmatrix} - 
    \sum_{i=1}^{\hat{n}-n} \hat{F}_i \SF_{n,\hat{n} - n - i},
    \label{eqn: SF_n,nhat-n}
    \\
    \hat{\SG}_{n,\hat{n}-n} &\triangleq \hat{\SG}_{\hat{n}} - \sum_{i=1}^{\hat{n}-n} \hat{F}_i 
    \begin{bmatrix}
        0_{p \times im} & \SG_{n,\hat{n} - n - i}
    \end{bmatrix}.
    \label{eqn: SG_n,nhat-n}
\end{align}
\end{lem}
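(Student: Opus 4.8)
The plan is to unfold the order-$\hat n$ recursion \eqref{eqn: IO Model 2} a single step at time $k^*$ and then use the hypothesis $\hat y_k = y_k$ on the window $k_0 \le k \le k^*+\hat n - 1$ together with the output transition equation of Proposition \ref{prop: IO j step update} to re-express everything in terms of $\SY_{k^*,n}$ and $\SU_{k^*,\hat n}$. Since $k^* \le k^*+\hat n - i \le k^*+\hat n - 1$ for every $1 \le i \le \hat n$, each past output appearing in \eqref{eqn: IO Model 2} at time $k^*$ coincides with the corresponding output of the order-$n$ model, so
\begin{align*}
    \hat y_{k^*+\hat n} = -\sum_{i=1}^{\hat n} \hat F_i \, y_{k^*+\hat n - i} + \hat\SG_{\hat n}\, \SU_{k^*,\hat n}.
\end{align*}
I would then split the autoregressive sum at $i = \hat n - n$: the \emph{recent} terms $\hat n - n + 1 \le i \le \hat n$ involve exactly $y_{k^*},\dots,y_{k^*+n-1}$, i.e.\ the entries of $\SY_{k^*,n}$, whereas the \emph{old} terms $1 \le i \le \hat n - n$ involve outputs $y_{k^*+n},\dots,y_{k^*+\hat n - 1}$ lying strictly beyond the initial window of the order-$n$ model.

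For the recent terms, the index change $l = i - (\hat n - n)$ gives $\sum_{i=\hat n - n + 1}^{\hat n} \hat F_i \, y_{k^*+\hat n - i} = \begin{bmatrix} \hat F_{\hat n - n + 1} & \cdots & \hat F_{\hat n} \end{bmatrix} \SY_{k^*,n}$, which is the first block of \eqref{eqn: SF_n,nhat-n}. For each old term, writing $k^*+\hat n - i = k^* + n + (\hat n - n - i)$ with $\hat n - n - i \ge 0$ and applying Proposition \ref{prop: IO j step update} with $k = k^*$ and $j = \hat n - n - i$ yields $y_{k^*+\hat n - i} = -\SF_{n,\hat n - n - i}\SY_{k^*,n} + \SG_{n,\hat n - n - i}\SU_{k^*,\hat n - i}$. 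Because $\SU_{k^*,\hat n - i}$ is precisely the trailing $m(\hat n - i + 1)$-block of $\SU_{k^*,\hat n}$, we have $\SG_{n,\hat n - n - i}\SU_{k^*,\hat n - i} = \begin{bmatrix} 0_{p \times im} & \SG_{n,\hat n - n - i}\end{bmatrix}\SU_{k^*,\hat n}$, so the old terms contribute $\sum_{i=1}^{\hat n - n}\hat F_i \SF_{n,\hat n - n - i}$ to the coefficient of $\SY_{k^*,n}$ and $-\sum_{i=1}^{\hat n - n}\hat F_i\begin{bmatrix}0_{p\times im} & \SG_{n,\hat n - n - i}\end{bmatrix}$ to the coefficient of $\SU_{k^*,\hat n}$.

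Collecting the $\SY_{k^*,n}$ and $\SU_{k^*,\hat n}$ terms then produces \eqref{eqn: IO 2 consolidation} with $\hat\SF_{n,\hat n - n}$ and $\hat\SG_{n,\hat n - n}$ exactly as defined in \eqref{eqn: SF_n,nhat-n}--\eqref{eqn: SG_n,nhat-n}; a sign check ($-\sum \hat F_i y = +\sum\hat F_i \SF_{n,\cdot}\SY - \sum\hat F_i[\,0\ \SG_{n,\cdot}\,]\SU$) confirms the minus sign in front of $\hat\SF_{n,\hat n-n}$. The routine but delicate part is the index bookkeeping: tracking which outputs fall inside versus outside the order-$n$ initial window, getting the shift $j = \hat n - n - i$ right when invoking Proposition \ref{prop: IO j step update}, and left-padding $\SG_{n,\hat n - n - i}$ with the correct number of zero columns so that every input regressor is written against the common vector $\SU_{k^*,\hat n}$. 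No induction is required, since the order-$\hat n$ model is unfolded only one step.
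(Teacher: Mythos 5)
Your proposal is correct and follows essentially the same route as the paper's proof: unfold \eqref{eqn: IO Model 2} one step at $k^*$, split the autoregressive sum at $i=\hat{n}-n$, collapse the recent terms into $\begin{bmatrix}\hat{F}_{\hat{n}-n+1} & \cdots & \hat{F}_{\hat{n}}\end{bmatrix}\SY_{k^*,n}$, apply Proposition \ref{prop: IO j step update} with $j=\hat{n}-n-i$ to the old terms, and zero-pad $\SG_{n,\hat{n}-n-i}$ so all input terms act on the common regressor $\SU_{k^*,\hat{n}}$. The index bookkeeping and sign check are both handled correctly, so there is nothing to add.
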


\begin{proof}
    Let $k_0 \in \BBN$ and let $k^* \ge k_0$. By \eqref{eqn: IO Model 2}, $\hat{y}_{k^*+\hat{n}}$ can be written as
    \begin{equation}
    \label{eqn: IO 2 consolidation temp 1}
        \hat{y}_{k^*+\hat{n}} = -\sum_{i=1}^{\hat{n}-n} \hat{F}_i y_{k^*+\hat{n}-i} -\sum_{\mathclap{i=\hat{n}-n + 1}}^{\hat{n}} \hat{F}_i y_{k+\hat{n}-i} + \hat{\SG}_{\hat{n}} \SU_{k^*,\hat{n}}.
    \end{equation}
    Note that the second term can be written as
    \begin{align}
    \label{eqn: IO 2 consolidation temp 2}
        \sum_{i=\hat{n}-n + 1}^{\hat{n}} \hat{F}_i y_{k^*+\hat{n}-i} = 
        \begin{bmatrix}
        \hat{F}_{\hat{n}-n+1} & \cdots & \hat{F}_{\hat{n}}
        \end{bmatrix}
        \SY_{k^*,n}.
    \end{align}
    Next, for all $1 \le i \le \hat{n}-n$, note that $\hat{n} - n - i \ge  0$ and Proposition \ref{prop: IO j step update} (with $k = k^*$ and $j = \hat{n}-n-i$) implies that 
    \begin{align}
    \label{eqn: IO 2 consolidation temp 3}
        y_{k^*+\hat{n}-i} = -\SF_{n,\hat{n}-n-i} \SY_{k^*,n} + \SG_{n,\hat{n}-n-i}\SU_{k^*,\hat{n}-i}.
    \end{align}
    Substituting \eqref{eqn: IO 2 consolidation temp 2} and \eqref{eqn: IO 2 consolidation temp 3} into \eqref{eqn: IO 2 consolidation temp 1}, it follows that
    \begin{align}
        \hat{y}_{k^*+\hat{n}} =& -\sum_{i=1}^{\hat{n}-n} \hat{F}_i (-\SF_{n,\hat{n}-n-i} \SY_{k^*,n} + \SG_{n,\hat{n}-n-i}\SU_{k^*,\hat{n}-i}) \nonumber
        \\
        &-\begin{bmatrix}
        \hat{F}_{\hat{n}-n+1} & \cdots & \hat{F}_{\hat{n}}
        \end{bmatrix}
        \SY_{k^*,n}
        + \hat{\SG}_{\hat{n}} \SU_{k^*,\hat{n}}.
        \label{eqn: IO 2 consolidation temp 4}
    \end{align}
    Finally, note that, for all $1 \le i \le \hat{n} - n$,
    \begin{align}
    \label{eqn: IO 2 consolidation temp 5}
        \SG_{n,\hat{n}-n-i}\SU_{k^*,\hat{n}-i} = 
        \begin{bmatrix}
            0_{p \times im} & \SG_{n,\hat{n} - n - i}
        \end{bmatrix} \SU_{k^*,\hat{n}}.
    \end{align}
    Substituting \eqref{eqn: IO 2 consolidation temp 4} into \eqref{eqn: IO 2 consolidation temp 5} and simplifying yields \eqref{eqn: IO 2 consolidation}. 
\end{proof}

\begin{lem}
\label{lem: connecting IO 1 and IO 2}
For all $\hat{n} > n$, 
    \begin{align}
    \label{eqn: connecting IO 1 and IO 2}
        \begin{bmatrix}
            -\hat{\SF}_{n,\hat{n}-n} & \hat{\SG}_{n,\hat{n}-n}
        \end{bmatrix}
        =
        \begin{bmatrix}
            -\hat{\SF}_{\hat{n}} & \hat{\SG}_{\hat{n}}
        \end{bmatrix} M_{n,\hat{n}}, 
    \end{align}
    where $M_{n,\hat{n}} \in \BBR^{p\hat{n} + m(\hat{n}+1) \times pn + m(\hat{n}+1)}$ is defined
    \begingroup 
    \setlength\arraycolsep{2pt}
    \begin{equation}
    \label{eqn: M defn}
        M_{n,\hat{n}} \triangleq \begin{bmatrix}
            M_{n,\hat{n}}^1 & M_{n,\hat{n}}^2 
            \\
            I_{pn} & 0_{pn \times m (\hat{n}+1)}
            \\ 
            0_{m(\hat{n}+1) \times pn} & I_{m(\hat{n}+1)}
        \end{bmatrix}
        =
        \begin{bmatrix}
            \begin{bmatrix} M_{n,\hat{n}}^1 & M_{n,\hat{n}}^2 \end{bmatrix}
            \\
            I_{pn + m(\hat{n}+1)}
        \end{bmatrix},
    \end{equation}
    \endgroup
    where $M_{n,\hat{n}}^1 \in \BBR^{p (\hat{n}-n) \times pn}$ and $M_{n,\hat{n}}^2 \in \BBR^{p (\hat{n}-n) \times m (\hat{n}+1)}$ are defined
    \begin{align}
        M_{n,\hat{n}}^1 \triangleq \begin{bmatrix}
            -\SF_{n,\hat{n}-n-1} \\ -\SF_{n,\hat{n}-n-2} \\ \vdots \\ -\SF_{n,1} \\ -\SF_{n,0} 
        \end{bmatrix}
        , 
        M_{n,\hat{n}}^2 \triangleq \begin{bmatrix}
            \begin{bmatrix} 0_{p \times m} & \SG_{n,\hat{n} - n - 1} \end{bmatrix}
            \\
            \begin{bmatrix} 0_{p \times 2m} & \SG_{n,\hat{n} - n - 2} \end{bmatrix}
            \\
            \vdots 
            \\
            \begin{bmatrix} 0_{p \times (\hat{n}-n-1)m} & \SG_{n,1} \end{bmatrix}
            \\
            \begin{bmatrix} 0_{p \times (\hat{n}-n)m} & \SG_{n,0} \end{bmatrix}
        \end{bmatrix}
    \end{align}
\end{lem}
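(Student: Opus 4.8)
The plan is to verify \eqref{eqn: connecting IO 1 and IO 2} by a direct block multiplication, partitioning the row block $\begin{bmatrix} -\hat{\SF}_{\hat{n}} & \hat{\SG}_{\hat{n}}\end{bmatrix}$ conformably with the three block-rows of $M_{n,\hat{n}}$ in \eqref{eqn: M defn} and then recognizing the result as the definitions \eqref{eqn: SF_n,nhat-n} and \eqref{eqn: SG_n,nhat-n}. First I would split $\hat{\SF}_{\hat{n}} = \begin{bmatrix} \hat{F}_1 & \cdots & \hat{F}_{\hat{n}}\end{bmatrix}$ into its first $\hat{n}-n$ block columns and its last $n$ block columns, writing
\begin{align*}
\begin{bmatrix} -\hat{\SF}_{\hat{n}} & \hat{\SG}_{\hat{n}}\end{bmatrix}
= \begin{bmatrix} -\begin{bmatrix} \hat{F}_1 & \cdots & \hat{F}_{\hat{n}-n}\end{bmatrix} & -\begin{bmatrix} \hat{F}_{\hat{n}-n+1} & \cdots & \hat{F}_{\hat{n}}\end{bmatrix} & \hat{\SG}_{\hat{n}}\end{bmatrix},
\end{align*}
a partition into blocks of widths $p(\hat{n}-n)$, $pn$, $m(\hat{n}+1)$, which is exactly conformable with the row partition of $M_{n,\hat{n}}$.

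Next I would carry out the product one block-row of $M_{n,\hat{n}}$ at a time. The block-row $\begin{bmatrix} I_{pn} & 0_{pn\times m(\hat{n}+1)}\end{bmatrix}$ reproduces $-\begin{bmatrix} \hat{F}_{\hat{n}-n+1} & \cdots & \hat{F}_{\hat{n}}\end{bmatrix}$ in the first $pn$ columns, and the block-row $\begin{bmatrix} 0_{m(\hat{n}+1)\times pn} & I_{m(\hat{n}+1)}\end{bmatrix}$ reproduces $\hat{\SG}_{\hat{n}}$ in the last $m(\hat{n}+1)$ columns. The block-row $\begin{bmatrix} M_{n,\hat{n}}^1 & M_{n,\hat{n}}^2\end{bmatrix}$ contributes $-\begin{bmatrix} \hat{F}_1 & \cdots & \hat{F}_{\hat{n}-n}\end{bmatrix}\begin{bmatrix} M_{n,\hat{n}}^1 & M_{n,\hat{n}}^2\end{bmatrix}$; since the $i$-th block-row of $M_{n,\hat{n}}^1$ is $-\SF_{n,\hat{n}-n-i}$ and of $M_{n,\hat{n}}^2$ is $\begin{bmatrix} 0_{p\times im} & \SG_{n,\hat{n}-n-i}\end{bmatrix}$, this equals
\begin{align*}
\begin{bmatrix} \displaystyle\sum_{i=1}^{\hat{n}-n}\hat{F}_i\SF_{n,\hat{n}-n-i} & \displaystyle-\sum_{i=1}^{\hat{n}-n}\hat{F}_i\begin{bmatrix} 0_{p\times im} & \SG_{n,\hat{n}-n-i}\end{bmatrix}\end{bmatrix},
\end{align*}
the two sign flips in the first summand cancelling. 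Summing the three contributions in the first $pn$ columns gives $\sum_{i=1}^{\hat{n}-n}\hat{F}_i\SF_{n,\hat{n}-n-i} - \begin{bmatrix} \hat{F}_{\hat{n}-n+1} & \cdots & \hat{F}_{\hat{n}}\end{bmatrix} = -\hat{\SF}_{n,\hat{n}-n}$ by \eqref{eqn: SF_n,nhat-n}, and in the last $m(\hat{n}+1)$ columns gives $\hat{\SG}_{\hat{n}} - \sum_{i=1}^{\hat{n}-n}\hat{F}_i\begin{bmatrix} 0_{p\times im} & \SG_{n,\hat{n}-n-i}\end{bmatrix} = \hat{\SG}_{n,\hat{n}-n}$ by \eqref{eqn: SG_n,nhat-n}, which is \eqref{eqn: connecting IO 1 and IO 2}.

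The only point needing care — and the step I would double-check explicitly — is bookkeeping of the column widths: since $\SG_{n,j}\in\BBR^{p\times m(n+1+j)}$, each block-row $\begin{bmatrix} 0_{p\times im} & \SG_{n,\hat{n}-n-i}\end{bmatrix}$ of $M_{n,\hat{n}}^2$ has width $im + m(n+1+\hat{n}-n-i) = m(\hat{n}+1)$, so $M_{n,\hat{n}}^2$ genuinely shares the column count $m(\hat{n}+1)$ with the third block-row of $M_{n,\hat{n}}$, and the zero-padding here matches that appearing in \eqref{eqn: SG_n,nhat-n}. There is no conceptual obstacle; the lemma is simply a repackaging of the recursions defining $\hat{\SF}_{n,\hat{n}-n}$ and $\hat{\SG}_{n,\hat{n}-n}$ as a single linear map, and the argument is pure block-matrix algebra.
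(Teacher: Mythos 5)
Your proof is correct and follows essentially the same route as the paper's: both arguments amount to recognizing that the sums in \eqref{eqn: SF_n,nhat-n} and \eqref{eqn: SG_n,nhat-n} are exactly the products $\begin{bmatrix}\hat{F}_1 & \cdots & \hat{F}_{\hat{n}-n}\end{bmatrix}M_{n,\hat{n}}^1$ and $\begin{bmatrix}\hat{F}_1 & \cdots & \hat{F}_{\hat{n}-n}\end{bmatrix}M_{n,\hat{n}}^2$ (up to sign) and then stacking the identity blocks to account for the remaining coefficients. Your explicit dimension check on the block rows of $M_{n,\hat{n}}^2$ is a nice addition but the content is the same block-matrix bookkeeping.
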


\begin{proof}
    It follows from \eqref{eqn: SF_n,nhat-n} that
    \begin{align*}
        \hat{\SF}_{n,\hat{n}-n} = \begin{bmatrix}
        \hat{F}_{\hat{n}-n+1} & \cdots & \hat{F}_{\hat{n}}
    \end{bmatrix} - 
    \begin{bmatrix}
        \hat{F}_{1} & \cdots & \hat{F}_{\hat{n}-n}
    \end{bmatrix} M_{n,\hat{n}}^1.
    \end{align*}
    Furthermore, it follows from \eqref{eqn: SG_n,nhat-n} that
    \begin{align*}
        \hat{\SG}_{n,\hat{n}-n} = \hat{\SG}_{\hat{n}} - \begin{bmatrix}
        \hat{F}_{1} & \cdots & \hat{F}_{\hat{n}-n}
    \end{bmatrix} M_{n,\hat{n}}^2.
    \end{align*}
    Combining these two expressions can yields \eqref{eqn: connecting IO 1 and IO 2}.
\end{proof}

\begin{theo}
\label{theo: IO Equivalence condition}
    Consider input/output model \eqref{eqn: IO Model} with order $n$ and input/output model \eqref{eqn: IO Model 2} with order $\hat{n} > n$. Models \eqref{eqn: IO Model} and \eqref{eqn: IO Model 2} are equivalent if and only if 
    \begin{align}
    \label{eqn: IO Equivalence condition v2}
        \hat{\SF}_{n,\hat{n}-n} = \SF_{n,\hat{n}-n}, \quad \hat{\SG}_{n,\hat{n}-n} = \SG_{n,\hat{n}-n},
    \end{align}
    which holds if and only if
    \begin{align}
    \label{eqn: IO Equivalence condition}
        \begin{bmatrix}
            -\hat{\SF}_{\hat{n}} & \hat{\SG}_{\hat{n}}
        \end{bmatrix} M_{n,\hat{n}}
        =
        \begin{bmatrix}
            -{\SF}_{n,\hat{n}-n} & {\SG}_{n,\hat{n}-n}
        \end{bmatrix}.
    \end{align}
\end{theo}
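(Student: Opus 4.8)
The plan is to establish first the inner equivalence, namely that models \eqref{eqn: IO Model} and \eqref{eqn: IO Model 2} are equivalent if and only if \eqref{eqn: IO Equivalence condition v2} holds, and then to read off the outer equivalence with \eqref{eqn: IO Equivalence condition} directly from Lemma \ref{lem: connecting IO 1 and IO 2}. The two workhorses are Lemma \ref{lem: IO 2 consolidation}, which collapses one step of the order-$\hat{n}$ recursion into the order-$n$ regressor $(\SY_{k^*,n},\SU_{k^*,\hat{n}})$ whenever the two output histories agree on $k_0,\dots,k^*+\hat{n}-1$, and Proposition \ref{prop: IO j step update} with $j=\hat{n}-n$, which does the analogous collapse for the order-$n$ model.

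For necessity, suppose the models are equivalent. Taking $k=k_0+\hat{n}$ in Definition \ref{defn: equivalence of models} gives $\hat{y}_{k_0+\hat{n}}=y_{k_0+\hat{n}}$ for every choice of the arbitrary inputs $u_{k_0},u_{k_0+1},\dots$ and the arbitrary initial outputs $y_{k_0},\dots,y_{k_0+n-1}$. Lemma \ref{lem: IO 2 consolidation} with $k^*=k_0$ writes the left side as $-\hat{\SF}_{n,\hat{n}-n}\SY_{k_0,n}+\hat{\SG}_{n,\hat{n}-n}\SU_{k_0,\hat{n}}$, and Proposition \ref{prop: IO j step update} with $k=k_0$, $j=\hat{n}-n$ writes the right side as $-\SF_{n,\hat{n}-n}\SY_{k_0,n}+\SG_{n,\hat{n}-n}\SU_{k_0,\hat{n}}$. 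Since $\SY_{k_0,n}\in\BBR^{pn}$ and $\SU_{k_0,\hat{n}}\in\BBR^{m(\hat{n}+1)}$ range independently over their whole spaces, equating these two expressions and setting first $\SU_{k_0,\hat{n}}=0$ and then $\SY_{k_0,n}=0$ yields \eqref{eqn: IO Equivalence condition v2}.

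For sufficiency, assume \eqref{eqn: IO Equivalence condition v2} holds; I would show $\hat{y}_k=y_k$ for all $k\ge k_0+\hat{n}$ by strong induction on $k$. To prove $\hat{y}_k=y_k$, note that $\hat{y}_j=y_j$ for all $k_0\le j\le k-1$ — by construction when $j<k_0+\hat{n}$ and by the inductive hypothesis when $k_0+\hat{n}\le j\le k-1$ — so, setting $k^*=k-\hat{n}\ge k_0$, Lemma \ref{lem: IO 2 consolidation} applies and, since $k^*+n-1\le k-1$, the vector $\SY_{k^*,n}$ is the same whether built from $\hat{y}$ or from $y$. Then Lemma \ref{lem: IO 2 consolidation} gives $\hat{y}_k=-\hat{\SF}_{n,\hat{n}-n}\SY_{k^*,n}+\hat{\SG}_{n,\hat{n}-n}\SU_{k^*,\hat{n}}$, Proposition \ref{prop: IO j step update} gives $y_k=-\SF_{n,\hat{n}-n}\SY_{k^*,n}+\SG_{n,\hat{n}-n}\SU_{k^*,\hat{n}}$, and \eqref{eqn: IO Equivalence condition v2} forces $\hat{y}_k=y_k$, closing the induction.

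Finally, the equivalence of \eqref{eqn: IO Equivalence condition v2} and \eqref{eqn: IO Equivalence condition} is just a rewriting: Lemma \ref{lem: connecting IO 1 and IO 2} gives $\begin{bmatrix}-\hat{\SF}_{n,\hat{n}-n}&\hat{\SG}_{n,\hat{n}-n}\end{bmatrix}=\begin{bmatrix}-\hat{\SF}_{\hat{n}}&\hat{\SG}_{\hat{n}}\end{bmatrix}M_{n,\hat{n}}$, so \eqref{eqn: IO Equivalence condition v2}, which says precisely $\begin{bmatrix}-\hat{\SF}_{n,\hat{n}-n}&\hat{\SG}_{n,\hat{n}-n}\end{bmatrix}=\begin{bmatrix}-\SF_{n,\hat{n}-n}&\SG_{n,\hat{n}-n}\end{bmatrix}$, holds if and only if \eqref{eqn: IO Equivalence condition} does. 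The one place needing care is the bookkeeping in the sufficiency induction: Lemma \ref{lem: IO 2 consolidation} requires the two output sequences to coincide on the \emph{entire} window $k_0,\dots,k^*+\hat{n}-1$ before a step can be taken, so the induction must be carried over that sliding window rather than at a single index; the rest is direct substitution and offers no real obstacle.
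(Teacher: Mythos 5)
Your proposal is correct and follows essentially the same route as the paper's proof: necessity via Lemma \ref{lem: IO 2 consolidation} and Proposition \ref{prop: IO j step update} at $k^*=k_0$ together with the arbitrariness of $\SY_{k_0,n}$ and $\SU_{k_0,\hat{n}}$, sufficiency via strong induction over the sliding window, and the passage to \eqref{eqn: IO Equivalence condition} read off from Lemma \ref{lem: connecting IO 1 and IO 2}. Your explicit remark that the induction must maintain agreement of the two output histories over the entire window $k_0,\dots,k^*+\hat{n}-1$ before Lemma \ref{lem: IO 2 consolidation} can be invoked is exactly the bookkeeping the paper's induction performs.
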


\begin{proof}
    It follows from Lemma \ref{lem: connecting IO 1 and IO 2} that \eqref{eqn: IO Equivalence condition v2} and \eqref{eqn: IO Equivalence condition} are equivalent.
    Hence, it suffices to show that Models \eqref{eqn: IO Model} and \eqref{eqn: IO Model 2} are equivalent if and only if \eqref{eqn: IO Equivalence condition v2} holds. 
    To prove sufficiency, assume models \eqref{eqn: IO Model} and \eqref{eqn: IO Model 2} are equivalent and let $k_0 \in \BBN$. 
    Then, it follows from Definition \ref{defn: equivalence of models} that, for all $\SY_{k_0,n} \in \BBR^{pn}$ and $\SU_{k_0,\hat{n}} \in \BBR^{m(\hat{n}+1)}$, $y_{k_0+\hat{n}} = \hat{y}_{k_0+\hat{n}}$ holds, where, for all $k_0 + n \le k \le k_0 + \hat{n}$,
    $y_{k} \in \BBR^p$ is given by \eqref{eqn: IO Model}, for all $k_0 \le k \le k_0+\hat{n}-1$, $\hat{y}_{k} = y_{k}$, and where $\hat{y}_{k+\hat{n}} \in \BBR^p$ is given by \eqref{eqn: IO Model 2}.
    It follows from Proposition \ref{prop: IO j step update} and Lemma \ref{lem: IO 2 consolidation} that
    \begin{align*}
        y_{k + \hat{n}} &= -{\SF}_{n,\hat{n}-n} \SY_{k,n} + {\SG}_{n,\hat{n}-n} \SU_{k,n},
        \\
        \hat{y}_{k + \hat{n}} &= -\hat{\SF}_{n,\hat{n}-n} \SY_{k,n} + \hat{\SG}_{n,\hat{n}-n} \SU_{k,n},
    \end{align*}
    Since $\SY_{k,n}$ and $\SU_{k,\hat{n}}$ are chosen arbitrarily and $y_{k+\hat{n}} = \hat{y}_{k+\hat{n}}$, \eqref{eqn: IO Equivalence condition v2} follows.

    To prove necessity, assume that \eqref{eqn: IO Equivalence condition} holds, let $k_0 \in \BBN$. For all $k \ge k_0$, let $u_k \in \BBR^m$ be arbitrary, for all $k_0 \le k \le k_0 + n-1$, let $y_k \in \BBR^p$ be arbitrary and, for all $k \ge k_0 + n$, let $y_k \in \BBR^p$ be given by \eqref{eqn: IO Model}. 
    Furthermore, for all $k_0 \le k \le k_0+\hat{n}-1$, let $\hat{y}_k = y_k$ and, for all $k \ge k_0+\hat{n}$, let $\hat{y}_k \in \BBR^p$ be given by \eqref{eqn: IO Model 2}.
    We now show by strong induction that, for all $k^* \ge k_0$, $\hat{y}_{k^*+\hat{n}} = y_{k^*+\hat{n}}$.
    First, consider the base case $k^* = k_0$. 
    It follows from \eqref{eqn: IO j step update} of Proposition \ref{prop: IO j step update} (with $k = k_0$ and $j = \hat{n}-n$) that $y_{k_0+\hat{n}} = -\SF_{n,\hat{n}-n} \SY_{k_0,n} + \SG_{n,\hat{n}-n} \SU_{k_0,\hat{n}}$.
    Moreover, since by assumption, for all $k_0 \le k \le k_0+\hat{n}-1$, $\hat{y}_k = y_k$,
    it follows from \eqref{eqn: IO 2 consolidation} of Lemma \ref{lem: IO 2 consolidation} (with $k^* = k_0$) that $\hat{y}_{k_0+\hat{n}} = -\hat{\SF}_{n,\hat{n}-n} \SY_{k_0,n} + \hat{\SG}_{n,\hat{n}-n}\SU_{k_0,\hat{n}}$.
    Finally, \eqref{eqn: IO Equivalence condition v2} implies that $y_{k_0+\hat{n}} = \hat{y}_{k_0+\hat{n}}$.

    Next, let $k^* \ge k_0+1$ and assume for inductive hypothesis that, for all $k_0 \le k \le k^*-1$, $\hat{y}_{k+\hat{n}} = y_{k+\hat{n}}$.
    It follows from \eqref{eqn: IO j step update} of Proposition \ref{prop: IO j step update} (with $k = k^*$ and $j = \hat{n}-n$) that $y_{k^* + \hat{n}} = -\SF_{n,\hat{n}-n} \SY_{k^*,n} + \SG_{n,\hat{n}-n} \SU_{k^*,\hat{n}}$.
    Moreover, it follows from inductive hypothesis and \eqref{eqn: IO 2 consolidation} of Lemma \ref{lem: IO 2 consolidation} that $\hat{y}_{k^* + \hat{n}} = -\hat{\SF}_{n,\hat{n}-n} \SY_{k^*,n} + \hat{\SG}_{n,\hat{n}-n}\SU_{k^*,\hat{n}}$.
    Finally, \eqref{eqn: IO Equivalence condition v2} implies that $y_{k^*+\hat{n}} = \hat{y}_{k^*+\hat{n}}$.
    Thus, by strong induction, for all $k^* \ge k_0$, $\hat{y}_{k^*+\hat{n}} = y_{k^*+\hat{n}}$, and models \eqref{eqn: IO Model} and \eqref{eqn: IO Model 2} are equivalent.
\end{proof}
Note that the matrix $M_{n,\hat{n}} \in \BBR^{p\hat{n} + m(\hat{n}+1) \times pn + m(\hat{n}+1)}$ will be of importance in the later discussion on online identification.
\begin{example}
To better illustrate the result of Theorem \ref{theo: IO Equivalence condition}, consider the case $m = p = 1$, $n = 1$, and $\hat{n} = 2$.
It follows from Theorem \ref{theo: IO Equivalence condition} that models \eqref{eqn: IO Model} and \eqref{eqn: IO Model 2} are equivalent if and only if 
\begin{align}
    \begin{bmatrix}
        -\hat{F}_1 & -\hat{F}_2 & \hat{G}_0 & \hat{G}_1 & \hat{G}_2
    \end{bmatrix}
    \begin{bmatrix}
        -F_1 & 0 & G_0 & G_1 
        \\
        1 & 0 & 0 & 0 
        \\
        0 & 1 & 0 & 0 
        \\
        0 & 0 & 1 & 0 
        \\
        0 & 0 & 0 & 1 
    \end{bmatrix} \nonumber
    \\
    =
    \begin{bmatrix}
        F_1^2 & G_0 & G_1 - F_1 G_0 & -F_1 G_1
    \end{bmatrix}.
    \label{eqn: 1st and 2nd order equivalence}
\end{align}
As a sanity check, note that the trivial case $\hat{F}_1 = F_1$, $\hat{G}_0 = G_0$, $\hat{G}_1 = G_1$ and $\hat{F}_2 = \hat{G}_2 = 0$ satisfies \eqref{eqn: 1st and 2nd order equivalence}. $\hfill\mbox{$\diamond$}$
\end{example}

Definition \ref{defn: reducible} defines an input/output model as \textit{reducible} if there exists an equivalent input/output model of lower-order, and \textit{irreducible} otherwise.
Theorem \ref{theo: reducibility} provides necessary and sufficient conditions for reducibility of input/output models.

\begin{defn}
\label{defn: reducible}
    Consider input/output model \eqref{eqn: IO Model 2} with order $\hat{n}$. The input/output model \eqref{eqn: IO Model 2} is \textbf{reducible} if there exists an input/output model \eqref{eqn: IO Model} with order $n \le \hat{n}-1$ such that $\eqref{eqn: IO Model}$ and \eqref{eqn: IO Model 2} are equivalent.
    Otherwise, input/output model \eqref{eqn: IO Model 2} is \textbf{irreducible}.
\end{defn}

\begin{theo}
\label{theo: reducibility}
The input/output model \eqref{eqn: IO Model 2} is reducible if and only if there exists $F_1 \in \BBR^{p \times p}$ such that
\begin{align}
(\hat{F}_1 - F_1) F_{\hat{n}-1} &= \hat{F}_{\hat{n}},
\label{eqn: reducible condition 1}
\\
(\hat{F}_1 - F_1)  G_{\hat{n}-1} &= \hat{G}_{\hat{n}},
\label{eqn: reducible condition 2}
\end{align}
where $G_0 \triangleq \hat{G}_0$ and, for all $2 \le i \le \hat{n} - 1$ and $1 \le j \le \hat{n} - 1$,
\begin{align}
    F_{i} \triangleq &= \hat{F}_{i} - (\hat{F}_1 - F_1) F_{i-1},
    \label{eqn: reduced coefficients 1}
    \\
    G_{j} \triangleq  &= \hat{G}_{j} - (\hat{F}_1 - F_1) G_{j-1}.
    \label{eqn: reduced coefficients 2}
\end{align}
Moreover, if there exists $F_1 \in \BBR^{p \times p}$ satisfying \eqref{eqn: reducible condition 1} and \eqref{eqn: reducible condition 2}, then input/output model \eqref{eqn: IO Model} with coefficients $F_1,\hdots,F_{\hat{n}-1}$ and $G_0,\hdots,G_{\hat{n}-1}$ defined by \eqref{eqn: reduced coefficients 1} and \eqref{eqn: reduced coefficients 2} is equivalent to input/output model \eqref{eqn: IO Model 2}.
\end{theo}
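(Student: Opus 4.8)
The plan is to deduce the theorem from Theorem~\ref{theo: IO Equivalence condition} applied with $n$ replaced by $\hat{n}-1$. As a preliminary I would reduce the statement to the case of an equivalent model of order exactly $\hat{n}-1$, i.e.\ show that \eqref{eqn: IO Model 2} is reducible if and only if it is equivalent to some input/output model of order $\hat{n}-1$. The forward direction of this reduction is the substantive one: given an equivalent model of order $n\le\hat{n}-1$, I would append zero coefficient blocks $F_{n+1}=\cdots=F_{\hat{n}-1}=0$, $G_{n+1}=\cdots=G_{\hat{n}-1}=0$ and check, using the equivalence characterization \eqref{eqn: IO Equivalence condition v2}, that the resulting zero-padded model of order $\hat{n}-1$ is still equivalent to \eqref{eqn: IO Model 2}. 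I expect this to be the most delicate point, since it requires tracking how zero-padding interacts with the shift $\mathcal{S}(\cdot,\cdot)$ and with the consolidation matrices $\SF_{n,j}$, $\SG_{n,j}$.

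With this reduction in hand, I would apply Theorem~\ref{theo: IO Equivalence condition} to model \eqref{eqn: IO Model} of order $\hat{n}-1$ and model \eqref{eqn: IO Model 2} of order $\hat{n}$, so that $\hat{n}-n=1$ throughout. Then \eqref{eqn: IO Equivalence condition v2} reads $\hat{\SF}_{\hat{n}-1,1}=\SF_{\hat{n}-1,1}$ and $\hat{\SG}_{\hat{n}-1,1}=\SG_{\hat{n}-1,1}$, and all four matrices collapse to two-term expressions because of the $\hat{n}-n=1$ and $j=1$ special cases of \eqref{eqn: SF_n,nhat-n}--\eqref{eqn: SG_n,nhat-n} and \eqref{eqn: SF_n,j defn}--\eqref{eqn: SG_n,j defn}; for instance $\hat{\SF}_{\hat{n}-1,1}=\begin{bmatrix}\hat{F}_2&\cdots&\hat{F}_{\hat{n}}\end{bmatrix}-\hat{F}_1\begin{bmatrix}F_1&\cdots&F_{\hat{n}-1}\end{bmatrix}$ whereas $\SF_{\hat{n}-1,1}=\begin{bmatrix}F_2&\cdots&F_{\hat{n}-1}&0\end{bmatrix}-F_1\begin{bmatrix}F_1&\cdots&F_{\hat{n}-1}\end{bmatrix}$, and the $\SG$'s are handled the same way. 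Alternatively one may use \eqref{eqn: IO Equivalence condition} directly, noting that for $n=\hat{n}-1$ the matrix $M_{\hat{n}-1,\hat{n}}$ of \eqref{eqn: M defn} reduces to a single nontrivial block-row $\begin{bmatrix}-F_1&\cdots&-F_{\hat{n}-1}&0_{p\times m}&G_0&\cdots&G_{\hat{n}-1}\end{bmatrix}$ sitting above an identity block.

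Finally, I would equate these matrices block by block. The $\SF$-equality consists of $\hat{n}-1$ blocks of size $p\times p$: its first $\hat{n}-2$ blocks give $F_i=\hat{F}_i-(\hat{F}_1-F_1)F_{i-1}$ for $2\le i\le\hat{n}-1$, i.e.\ \eqref{eqn: reduced coefficients 1}, and its last block gives $(\hat{F}_1-F_1)F_{\hat{n}-1}=\hat{F}_{\hat{n}}$, i.e.\ \eqref{eqn: reducible condition 1}. The $\SG$-equality consists of $\hat{n}+1$ blocks of size $p\times m$: block $0$ gives $G_0=\hat{G}_0$, blocks $1$ through $\hat{n}-1$ give $G_j=\hat{G}_j-(\hat{F}_1-F_1)G_{j-1}$, i.e.\ \eqref{eqn: reduced coefficients 2}, and block $\hat{n}$ gives $(\hat{F}_1-F_1)G_{\hat{n}-1}=\hat{G}_{\hat{n}}$, i.e.\ \eqref{eqn: reducible condition 2}. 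Since \eqref{eqn: reduced coefficients 1}--\eqref{eqn: reduced coefficients 2} together with $G_0=\hat{G}_0$ recursively determine $F_2,\dots,F_{\hat{n}-1}$ and $G_1,\dots,G_{\hat{n}-1}$ from $F_1$, this block system is solvable in $F_1,\dots,F_{\hat{n}-1},G_0,\dots,G_{\hat{n}-1}$ precisely when some $F_1$ makes the so-determined coefficients also satisfy \eqref{eqn: reducible condition 1}--\eqref{eqn: reducible condition 2}, which is the claimed equivalence. The ``moreover'' clause is the reverse reading of the same block identities: for any such $F_1$, the order-$(\hat{n}-1)$ model with coefficients \eqref{eqn: reduced coefficients 1}--\eqref{eqn: reduced coefficients 2} satisfies \eqref{eqn: IO Equivalence condition v2} by construction and is therefore equivalent to \eqref{eqn: IO Model 2} by Theorem~\ref{theo: IO Equivalence condition}.
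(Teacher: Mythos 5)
Your proposal is correct and follows essentially the same route as the paper's proof: reduce to the case of an equivalent model of order exactly $\hat{n}-1$ by zero-padding, apply Theorem~\ref{theo: IO Equivalence condition} with $\hat{n}-n=1$ so that \eqref{eqn: IO Equivalence condition v2} collapses to the two-term expressions you wrote, and read off \eqref{eqn: reducible condition 1}--\eqref{eqn: reduced coefficients 2} block by block. The only difference is one of emphasis: you flag the zero-padding step as needing explicit verification against \eqref{eqn: IO Equivalence condition v2}, whereas the paper simply asserts it.
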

\begin{proof}
    To begin, if input/output model \eqref{eqn: IO Model} with order $n < \hat{n}-1$ is equivalent to input/output model \eqref{eqn: IO Model 2}, then there exists an input/output model with order $\hat{n}-1$ that is equivalent to \eqref{eqn: IO Model 2}, namely a model with the coefficients $F_1,\hdots,F_n$, $G_0,\hdots,G_n$, and $F_i = 0_{p \times p}$ and $G_i = 0_{p \times m}$ for all $n+1 \le i \le \hat{n}-1$.
    Hence, input/output model \eqref{eqn: IO Model 2}, with order $\hat{n}$ is reducible if and only if there exists an equivalent input/output model \eqref{eqn: IO Model} with order $n = \hat{n}-1$.

    Next, it follows from Theorem \ref{theo: IO Equivalence condition} that model \eqref{eqn: IO Model} with order $n = \hat{n}-1$ and model \eqref{eqn: IO Model 2} with order $\hat{n}$ are equivalent if and only if $\hat{\SF}_{\hat{n}-1,1} = \SF_{\hat{n}-1,1}$ and $\hat{\SG}_{\hat{n}-1,1} = \SG_{\hat{n}-1,1}$.
    Moreover, it follows from \eqref{eqn: SF_n,j defn}, \eqref{eqn: SF_n,nhat-n}, \eqref{eqn: SG_n,j defn}, and \eqref{eqn: SG_n,nhat-n} that
    \begin{align*}
        \hat{\SF}_{\hat{n}-1,1} & = \begin{bmatrix}
            \hat{F}_2  & \hat{F}_3 & \cdots & \hat{F}_{\hat{n}}
        \end{bmatrix} - \hat{F}_1 \SF_{\hat{n}-1} ,
        \\
        \SF_{\hat{n}-1,1} & =
        -F_1 \SF_{\hat{n}-1} + \begin{bmatrix}
            \hat{F}_2  & \hat{F}_3 & \cdots & \hat{F}_{\hat{n}-1} & 0_{p \times p}
        \end{bmatrix},
        \\
       \hat{\SG}_{\hat{n}-1,1} &=  \hat{\SG}_{\hat{n}} - \hat{F}_1 \begin{bmatrix}
            0_{p \times m} & \SG_{\hat{n}-1}
        \end{bmatrix} ,
        \\
        \SG_{\hat{n}-1,1} & =
        -F_1 \begin{bmatrix}
            0_{p \times m} & \SG_{\hat{n}-1}
        \end{bmatrix}
        +
        \begin{bmatrix}
            \SG_{\hat{n}-1} & 0_{p \times m}
        \end{bmatrix}.
    \end{align*}
    Rearranging terms, it follows that $\hat{\SF}_{\hat{n}-1,1} = \SF_{\hat{n}-1,1}$ and $\hat{\SG}_{\hat{n}-1,1} = \SG_{\hat{n}-1,1}$ hold if and only if 
    \begin{align*}
        & (\hat{F}_1 - F_1) \SF_{\hat{n}-1} = \begin{bmatrix}
            \hat{F}_2 - F_2 & \cdots & \hat{F}_{\hat{n}-1} - F_{\hat{n}-1} & \hat{F}_{\hat{n}}
        \end{bmatrix},
        \\
        & (\hat{F}_1 - F_1) \begin{bmatrix}
            0_{p \times m} & \SG_{\hat{n}-1}
        \end{bmatrix}
        \\
        & \hphantom{(\hat{F}_1 - F_1) \SF_{\hat{n}-1}} =
        \begin{bmatrix}
            \hat{G}_0 - G_0 & \cdots & \hat{G}_{\hat{n}-1} - G_{\hat{n}-1} & \hat{G}_{\hat{n}}
        \end{bmatrix},
    \end{align*}
    which can be expanded as, for all $2 \le i \le \hat{n} - 1$ and $1 \le j \le \hat{n} - 1$,
    \begin{align}
        (\hat{F}_1 - F_1) F_{i-1} &= \hat{F}_{i} - F_{i}, \quad , \label{eqn: reducible temp 1}
        \\
        (\hat{F}_1 - F_1) F_{\hat{n}-1} &= \hat{F}_{\hat{n}}, \label{eqn: reducible temp 2}
        \\
        0_{p \times m} & = \hat{G}_0 - G_0. \label{eqn: reducible temp 3}
        \\
        (\hat{F}_1 - F_1) G_{j-1} &= \hat{G}_{j} - G_{j}, \label{eqn: reducible temp 4}
        \\
        (\hat{F}_1 - F_1)  G_{\hat{n}-1} &= \hat{G}_{\hat{n}}. \label{eqn: reducible temp 5}
    \end{align}
    Finally, note that, for all $2 \le i \le \hat{n}-1$, \eqref{eqn: reducible temp 1} holds if and only if \eqref{eqn: reduced coefficients 1}, for all $1 \le j \le \hat{n}-1$, \eqref{eqn: reducible temp 4} holds if and only if \eqref{eqn: reduced coefficients 2}, and \eqref{eqn: reducible temp 3} holds if and only if $ G_0= \hat{G}_0$.
    Hence, model \eqref{eqn: IO Model} with order $n = \hat{n}-1$ and model \eqref{eqn: IO Model 2} with order $\hat{n}$ are equivalent if and only if $F_1$ satisfies \eqref{eqn: reducible temp 2} and \eqref{eqn: reducible temp 5}.
\end{proof}

\section{Online Identification of Input/Output Models using Recursive Least Squares}
\label{sec: online ID of IO Models}

Next, we discuss the online identification of input/output models using recursive least squares. 
We consider input/output model \eqref{eqn: IO Model} of order $n$ to be the true system. 
We define $\theta_{n,{\rm true}} \in \BBR^{p \times pn + m(n+1)}$ as
\begin{equation}
    \theta_{n,{\rm true}} \triangleq \begin{bmatrix}
        F_1 & \cdots & F_{n} & G_0 & \cdots & G_{n} 
    \end{bmatrix} = \begin{bmatrix}
        \SF_n & \SG_n
    \end{bmatrix}.
\end{equation}
It follows from \eqref{eqn: IO matrix form} that, for all $k \ge 0$, $y_k \in \BBR^p$ is given by
\begin{align}
\label{eqn: yk = thetatrue phi}
    y_k = \theta_{n,{\rm true}} \phi_{n,k},
\end{align}
where, for all $k \ge 0$ and $n \ge 0$, $\phi_{n,k} \in \BBR^{pn + m(n+1)}$ is defined as
\begin{align}
\label{eqn: phik defn}
    \phi_{n,k} \triangleq \begin{bmatrix}
        -y_{k-1} \\ \vdots \\ -y_{k-n} \\ u_k \\ \vdots \\ u_{k-n}
    \end{bmatrix}
    = \begin{bmatrix}
        \SY_{k-n,n} \\ \SU_{k-n,n}
    \end{bmatrix}.
\end{align}

The objective of online identification is to identify the coefficients an input/output model \eqref{eqn: IO Model 2} of order $\hat{n}$ using measurements of the inputs $u_k$ and outputs $y_k$ generated from \eqref{eqn: IO Model}. 
This can be accomplished by minimizing the cost function $J_{\hat{n},k} \colon \BBR^{p \times p \hat{n}+ m(\hat{n}+1)} \rightarrow \BBR$, defined as
\begin{align}
\label{eqn: PCAC cost}
    J_{\hat{n},k}(\hat{\theta}_{\hat{n}}) = & \sum_{i=0}^k z_{\hat{n},i}^\rmT(\hat{\theta}_{\hat{n}}) z_{\hat{n},i}(\hat{\theta}_{\hat{n}}) \nonumber
    \\
    & + \tr \left[(\hat{\theta}_{\hat{n}}-{\theta}_{\hat{n},0}) {P}_{\hat{n},0}^{-1} (\hat{\theta}_{\hat{n}}-{\theta}_{\hat{n},0})^\rmT \right],
\end{align}
where $\hat{\theta}_{\hat{n}} \in \BBR^{p \times p \hat{n}+ m(\hat{n}+1)}$ are the coefficients to be identified, defined as
\begin{align}
    \hat{\theta}_{\hat{n}} \triangleq \begin{bmatrix}
        \hat{F}_1 & \cdots & \hat{F}_{\hat{n}} & \hat{G}_0 & \cdots & \hat{G}_{\hat{n}} 
    \end{bmatrix} = \begin{bmatrix}
        \hat{\SF}_{\hat{n}} & \hat{\SG}_{\hat{n}}
    \end{bmatrix},
\end{align}
where the residual error function $z_{\hat{n},k} \colon \BBR^{p \times p \hat{n}+ m(\hat{n}+1)} \rightarrow \BBR^p$ is defined as
%
%
\begin{align}
\label{eqn: zk = y - theta phi}
    z_{\hat{n},k}(\hat{\theta}_{\hat{n}}) \triangleq y_k - \hat{\theta}_{\hat{n}} \phi_{\hat{n},k},
\end{align}
where $\phi_{\hat{n},k} \in \BBR^{p \hat{n}+ m(\hat{n}+1)}$ is defined in \eqref{eqn: phik defn}, and where $\theta_{\hat{n},0} \in \BBR^{p \times p \hat{n}+ m(\hat{n}+1)}$ is an initial guess of the coefficients and $\bar{P}_{\hat{n},0}^{-1} \in \BBR^{[p^2 \hat{n}+ pm(\hat{n}+1)] \times [p^2 \hat{n}+ pm(\hat{n}+1)]}$ is the positive-definite regularization matrix. 
The following algorithm from \cite{nguyen2021predictive} uses recursive least squares to minimize $J_{\hat{n},k}$.

\begin{prop}
\label{prop: PCAC RLS}
For all $k \ge -\hat{n}$, let $u_k \in \BBR^m$, $y_k \in \BBR^p$.
Furthermore, let $\theta_{\hat{n},0} \in \BBR^{p \times p \hat{n}+ m(\hat{n}+1)}$ and let ${P}_{\hat{n},0} \in \BBR^{[p \hat{n}+ m(\hat{n}+1)] \times [p \hat{n}+ m(\hat{n}+1)]}$ be positive definite. 
Then, for all $k \ge 0$, $J_{\hat{n},k}$, defined in \eqref{eqn: PCAC cost}, has a unique global minimizer, denoted
\begin{align}
    \theta_{\hat{n},k+1} \triangleq \argmin_{\hat{\theta}_{\hat{n}} \in \BBR^{p \times \hat{n}(m+p) + m}} J_k(\hat{\theta}_{\hat{n}}).
\end{align}
which, for all $k \ge 1$, is given by
\begin{align}
\label{eqn: batch IO ID}
    \theta_{\hat{n},k} = (\SY_{0,k} {\Phi}_{\hat{n},k}^\rmT + \theta_{\hat{n},0} {P}_{\hat{n},0}^{-1} ) ({\Phi}_{\hat{n},k} {\Phi}_{\hat{n},k}^\rmT + {P}_{\hat{n},0}^{-1})^{-1},
\end{align}
and where ${\Phi}_{\hat{n},k} \in \BBR^{p \hat{n}+ m(\hat{n}+1) \times k}$ is defined
\begin{align}
    {\Phi}_{\hat{n},k} \triangleq \begin{bmatrix}
        \phi_{\hat{n},k-1} & \cdots & \phi_{\hat{n},0}
    \end{bmatrix}.
\end{align}
Moreover, for all $k \ge 0$, $\bar{\theta}_{\hat{n},k+1}$ is given recursively as
\begin{align}
    {P}_{\hat{n},k+1} &= {P}_{\hat{n},k} - \frac{{P}_{\hat{n},k}{\phi}_{\hat{n},k} {\phi}_{\hat{n},k}^\rmT{{P}_{\hat{n},k}}}
    {1 + {\phi}_{\hat{n},k}^\rmT {P}_{\hat{n},k} {\phi}_{\hat{n},k}} , \label{eqn: P update IO}
    \\
    {\theta}_{\hat{n},k+1} &= {\theta}_{\hat{n},k} + (y_k - {\theta}_{\hat{n},k} {\phi}_{\hat{n},k}) {\phi}_{\hat{n},k}^\rmT  {P}_{\hat{n},k+1}. \label{eqn: theta update IO}
\end{align}
\end{prop}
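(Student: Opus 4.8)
The plan is to proceed in three stages: (i) establish existence and uniqueness of the minimizer via strict convexity of $J_{\hat{n},k}$; (ii) derive the batch formula \eqref{eqn: batch IO ID} from the first-order optimality condition; and (iii) obtain the recursions \eqref{eqn: P update IO}--\eqref{eqn: theta update IO} from \eqref{eqn: batch IO ID} via a rank-one update together with the Sherman--Morrison identity. For (i), I would observe that $J_{\hat{n},k}$ is a quadratic function of the matrix variable $\hat{\theta}_{\hat{n}}$ that decouples across the rows of $\hat{\theta}_{\hat{n}}$: writing $\hat{\theta}_{\hat{n}}$ in terms of its rows, each row contributes an independent regularized least-squares cost whose Hessian is $\sum_{i=0}^{k}\phi_{\hat{n},i}\phi_{\hat{n},i}^{\rmT}+P_{\hat{n},0}^{-1}$. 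Since $P_{\hat{n},0}^{-1}$ is positive definite and each outer product is positive semidefinite, this Hessian is positive definite, so $J_{\hat{n},k}$ is a strictly convex, coercive quadratic and hence has a unique global minimizer, characterized by the stationarity condition $\partial J_{\hat{n},k}/\partial\hat{\theta}_{\hat{n}}=0$.

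For (ii), computing this gradient gives $-2\sum_{i=0}^{k}(y_i-\hat{\theta}_{\hat{n}}\phi_{\hat{n},i})\phi_{\hat{n},i}^{\rmT}+2(\hat{\theta}_{\hat{n}}-\theta_{\hat{n},0})P_{\hat{n},0}^{-1}=0$, which rearranges to $\hat{\theta}_{\hat{n}}\bigl(\sum_{i=0}^{k}\phi_{\hat{n},i}\phi_{\hat{n},i}^{\rmT}+P_{\hat{n},0}^{-1}\bigr)=\sum_{i=0}^{k}y_i\phi_{\hat{n},i}^{\rmT}+\theta_{\hat{n},0}P_{\hat{n},0}^{-1}$. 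Since the minimizer of $J_{\hat{n},k}$ is denoted $\theta_{\hat{n},k+1}$, replacing $k$ by $k-1$ and using $\sum_{i=0}^{k-1}\phi_{\hat{n},i}\phi_{\hat{n},i}^{\rmT}=\Phi_{\hat{n},k}\Phi_{\hat{n},k}^{\rmT}$ and $\sum_{i=0}^{k-1}y_i\phi_{\hat{n},i}^{\rmT}=\SY_{0,k}\Phi_{\hat{n},k}^{\rmT}$, the invertibility of $\Phi_{\hat{n},k}\Phi_{\hat{n},k}^{\rmT}+P_{\hat{n},0}^{-1}$ yields \eqref{eqn: batch IO ID} for all $k\ge 1$.

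For (iii), define $P_{\hat{n},k}\triangleq(\Phi_{\hat{n},k}\Phi_{\hat{n},k}^{\rmT}+P_{\hat{n},0}^{-1})^{-1}$ for $k\ge 1$, consistent with the prescribed $P_{\hat{n},0}$ at $k=0$. Because $\Phi_{\hat{n},k+1}=\begin{bmatrix}\phi_{\hat{n},k}&\Phi_{\hat{n},k}\end{bmatrix}$, one obtains the rank-one update $P_{\hat{n},k+1}^{-1}=P_{\hat{n},k}^{-1}+\phi_{\hat{n},k}\phi_{\hat{n},k}^{\rmT}$ for all $k\ge 0$; an easy induction shows each $P_{\hat{n},k}$ is positive definite, so $1+\phi_{\hat{n},k}^{\rmT}P_{\hat{n},k}\phi_{\hat{n},k}>0$ and the Sherman--Morrison identity applied to this update gives \eqref{eqn: P update IO}. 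For the coefficient recursion I would start from $\theta_{\hat{n},k+1}=(\SY_{0,k+1}\Phi_{\hat{n},k+1}^{\rmT}+\theta_{\hat{n},0}P_{\hat{n},0}^{-1})P_{\hat{n},k+1}$, split off the newest term via $\SY_{0,k+1}\Phi_{\hat{n},k+1}^{\rmT}=y_k\phi_{\hat{n},k}^{\rmT}+\SY_{0,k}\Phi_{\hat{n},k}^{\rmT}$, recognize $\SY_{0,k}\Phi_{\hat{n},k}^{\rmT}+\theta_{\hat{n},0}P_{\hat{n},0}^{-1}=\theta_{\hat{n},k}P_{\hat{n},k}^{-1}$ from \eqref{eqn: batch IO ID}, substitute $P_{\hat{n},k}^{-1}=P_{\hat{n},k+1}^{-1}-\phi_{\hat{n},k}\phi_{\hat{n},k}^{\rmT}$, and simplify to $\theta_{\hat{n},k+1}=\theta_{\hat{n},k}+(y_k-\theta_{\hat{n},k}\phi_{\hat{n},k})\phi_{\hat{n},k}^{\rmT}P_{\hat{n},k+1}$, which is \eqref{eqn: theta update IO}; the $k=0$ case follows from the same computation with $\Phi_{\hat{n},0}$ empty.

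The calculations are entirely standard; the only points needing care are the matrix-valued (rather than vector-valued) nature of the optimization, which is handled cleanly by observing that $J_{\hat{n},k}$ decouples across the rows of $\hat{\theta}_{\hat{n}}$, and the index bookkeeping between $J_{\hat{n},k}$, its minimizer $\theta_{\hat{n},k+1}$, and the summation ranges implicit in $\Phi_{\hat{n},k}$ and $\SY_{0,k}$ --- in particular, one must verify the $k=0$ edge case of both recursions directly. I do not anticipate any substantial obstacle.
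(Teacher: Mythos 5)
Your derivation is correct: the row-wise decoupling, the strictly convex quadratic argument, the normal equations yielding the batch formula, and the Sherman--Morrison/rank-one-update passage to the recursions are all sound, and your index bookkeeping (the minimizer of $J_{\hat{n},k}$ being $\theta_{\hat{n},k+1}$, with $\Phi_{\hat{n},k}$ collecting $\phi_{\hat{n},k-1},\hdots,\phi_{\hat{n},0}$) matches the statement. The paper itself gives no proof for this proposition --- it simply cites \cite{nguyen2021predictive} and \cite{islam2019recursive} --- so there is nothing to compare against in the text; what you have written is the standard RLS derivation that those references contain, and it fills the gap correctly.
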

\begin{proof}
    See \cite{nguyen2021predictive} and \cite{islam2019recursive}.
\end{proof}
In practice, the recursive formulation \eqref{eqn: theta update IO} is used to update the coefficient identification in real time as new measurements are obtained.
We've included the batch formulation \eqref{eqn: batch IO ID} to aid in subsequent analysis.
It will also be beneficial for analysis to note that using Lemma \ref{lem: matrix inversion lemma}, for all $k \ge 0$, \eqref{eqn: P update IO} can be rewritten as
\begin{align}
    {P}_{\hat{n},k+1}^{-1} &= {P}_{\hat{n},k}^{-1} + {\phi}_{\hat{n},k} {\phi}_{\hat{n},k}^\rmT.
    \label{eqn: IO Pinv update}
\end{align}

\subsection{Convergence with Correct Model Order}

We begin by considering the case $\hat{n} = n$, where the correct model order is known.
In this case, the identified coefficients ${\theta}_{n,k}$ and true model coefficients ${\theta}_{n,{\rm true}}$ are the same dimension, and it is natural to define the estimation error $\tilde{\theta}_{n,k} \in \BBR^{p \times p n+ m(n+1)}$ as
\begin{align}
    \tilde{\theta}_{n,k} \triangleq {\theta}_{n,k} - {\theta}_{n,{\rm true}}.
\end{align}
It then follows from \eqref{eqn: yk = thetatrue phi}, \eqref{eqn: theta update IO}, and \eqref{eqn: IO Pinv update} that, for all $k \ge 0$, the estimation error dynamics can be written as
\begin{align}
\label{eqn: correct order dynamics}
    \tilde{\theta}_{{n},k+1} = \tilde{\theta}_{{n},k} {P}_{\hat{n},k}^{-1} {P}_{\hat{n},k+1} .
\end{align}
It then follows from \eqref{eqn: IO Pinv update} that, for all $k \ge 0$,
\begin{align}
\label{eqn: theta tilde cumulative}
    \tilde{\theta}_{{n},k} = \tilde{\theta}_{{n},0} {P}_{{n},0}^{-1} {P}_{{n},k} 
    = \tilde{\theta}_{{n},0} {P}_{{n},0}^{-1} ({\Phi}_{{n},k-1} {\Phi}_{{n},k-1}^\rmT + {P}_{{n},0}^{-1})^{-1}.
\end{align}
We now define the notions of weak persistent excitation and persistent excitation. Note that persistent excitation implies weak persistent excitation.
\begin{defn}
    $(\phi_k)_{k=0}^\infty \subset \BBR^{p \times n}$ is \textbf{weakly persistently exciting} if
    \begin{align}
        \lim_{k \rightarrow \infty} \eigmin \left[ \sum_{i=0}^k \phi_i^\rmT \phi_i \right] = \infty.
    \end{align}
    $(\phi_k)_{k=0}^\infty \subset \BBR^{p \times n}$ is \textbf{persistently exciting} if
    \begin{align}
        C \triangleq \lim_{k \rightarrow \infty} \frac{1}{k} \sum_{i=0}^{k-1} \phi_i^\rmT \phi_i
    \end{align}
    exists and is positive definite. 
\end{defn}

Theorem \ref{theo: IO GAS} shows that weak persistent excitation is necessary and sufficient conditions for the global asymptotic stability (GAS) of the error dynamics \eqref{eqn: correct order dynamics}.
For definition of GAS and further discussion on weak persistent excitation, see \cite{bruce2021necessary}.
Moreover, Theorem \ref{theo: IO GAS} shows that persistent excitation implies that the convergence of $\tilde{\theta}_{n,k}$ to zero is asymptotically proportional to $\nicefrac{1}{k}$.

\begin{theo}
\label{theo: IO GAS}
    Consider the assumptions and notation of Proposition \ref{prop: PCAC RLS}.
    \eqref{eqn: correct order dynamics} is GAS if and only of $(\phi_{n,k}^\rmT)_{k=0}^\infty$ is weakly persistently exciting.
    Moreover, if $(\phi_{n,k}^\rmT)_{k=0}^\infty$ is persistently exciting, then
    \begin{align}
        \lim_{k \rightarrow \infty} k \tilde{\theta}_{n,k} = \tilde{\theta}_{{n},0} {P}_{{n},0}^{-1} C_n^{-1},
    \end{align}
    where $C_n \triangleq \lim_{N \rightarrow \infty} \frac{1}{N} \sum_{i=0}^{N-1} {\phi}_{{n},k-1} {\phi}_{{n},k-1}^\rmT$.
\end{theo}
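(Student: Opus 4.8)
The plan is to reduce everything to the closed form \eqref{eqn: theta tilde cumulative}, $\tilde\theta_{n,k} = \tilde\theta_{n,0} P_{n,0}^{-1} P_{n,k}$, together with the identity $P_{n,k}^{-1} = P_{n,0}^{-1} + \sum_{i=0}^{k-1}\phi_{n,i}\phi_{n,i}^\rmT$ obtained by iterating \eqref{eqn: IO Pinv update}. Since $\tilde\theta_{n,k}$ depends on the initial error only through the fixed invertible factor $\tilde\theta_{n,0}P_{n,0}^{-1}$ and on $k$ only through $P_{n,k}$, the whole analysis concerns the matrix sequence $(P_{n,k})_{k\ge0}$. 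From \eqref{eqn: IO Pinv update}, $P_{n,k+1}^{-1} - P_{n,k}^{-1} = \phi_{n,k}\phi_{n,k}^\rmT \succeq 0$, so $(P_{n,k}^{-1})_k$ is nondecreasing and $(P_{n,k})_k$ is nonincreasing in the Loewner order; in particular $0 \prec P_{n,k} \preceq P_{n,0}$ for all $k$, the transition map $\tilde\theta\mapsto\tilde\theta P_{n,0}^{-1}P_{n,k}$ is uniformly bounded (giving Lyapunov stability of the zero solution of \eqref{eqn: correct order dynamics} for free), and, being monotone and bounded below, $P_{n,k}$ converges in norm to some $P_{n,\infty}\succeq 0$.

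For the equivalence with weak persistent excitation: by linearity of \eqref{eqn: correct order dynamics}, global attractivity is equivalent to $\tilde\theta_{n,0}P_{n,0}^{-1}P_{n,k}\to 0$ for every $\tilde\theta_{n,0}$, i.e. to $P_{n,k}\to 0$, i.e. to $P_{n,\infty}=0$. Using $\eigmax[P_{n,k}] = 1/\eigmin[P_{n,k}^{-1}]$, this is equivalent to $\eigmin[P_{n,k}^{-1}]\to\infty$; and by Weyl's inequality $\eigmin[P_{n,k}^{-1}]$ and $\eigmin\big[\sum_{i=0}^{k-1}\phi_{n,i}\phi_{n,i}^\rmT\big]$ differ by at most a fixed finite constant (between $\eigmin[P_{n,0}^{-1}]$ and $\eigmax[P_{n,0}^{-1}]$), so $\eigmin[P_{n,k}^{-1}]\to\infty$ iff $\eigmin\big[\sum_{i=0}^{k-1}\phi_{n,i}\phi_{n,i}^\rmT\big]\to\infty$, i.e. iff $(\phi_{n,k}^\rmT)_{k=0}^\infty$ is weakly persistently exciting. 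Together with the stability already noted, this gives weak PE $\Rightarrow$ GAS. For the converse, if weak PE fails then $P_{n,\infty}\ne0$, hence $P_{n,0}^{-1}P_{n,\infty}\ne0$; choosing $\tilde\theta_{n,0}$ with $\tilde\theta_{n,0}P_{n,0}^{-1}P_{n,\infty}\ne0$, the corresponding solution of \eqref{eqn: correct order dynamics} converges to $\tilde\theta_{n,0}P_{n,0}^{-1}P_{n,\infty}\ne0$, so \eqref{eqn: correct order dynamics} is not globally attractive, hence not GAS.

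For the rate under persistent excitation, multiply \eqref{eqn: theta tilde cumulative} by $k$ to get $k\tilde\theta_{n,k} = \tilde\theta_{n,0}P_{n,0}^{-1}\big(\tfrac1k P_{n,k}^{-1}\big)^{-1}$, where $\tfrac1k P_{n,k}^{-1} = \tfrac1k P_{n,0}^{-1} + \tfrac1k\sum_{i=0}^{k-1}\phi_{n,i}\phi_{n,i}^\rmT$. Persistent excitation of $(\phi_{n,k}^\rmT)$ is precisely the statement that $\tfrac1k\sum_{i=0}^{k-1}\phi_{n,i}\phi_{n,i}^\rmT\to C_n\succ0$, while $\tfrac1k P_{n,0}^{-1}\to0$; hence $\tfrac1k P_{n,k}^{-1}\to C_n$, which is invertible, and continuity of matrix inversion on the invertible matrices gives $k\tilde\theta_{n,k}\to\tilde\theta_{n,0}P_{n,0}^{-1}C_n^{-1}$.

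I expect the main obstacle to be the necessity direction ("not weakly PE $\Rightarrow$ not GAS"): it relies on the monotone-convergence argument producing the limit $P_{n,\infty}$ and on exhibiting an explicit nonvanishing trajectory, and it also requires care in matching the precise notion of GAS used in \cite{bruce2021necessary} so that failure of global attractivity genuinely rules it out. The remaining steps are routine manipulations of \eqref{eqn: theta tilde cumulative} and \eqref{eqn: IO Pinv update} together with continuity of inversion.
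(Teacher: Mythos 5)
Your proof is correct, and the second claim (the $\nicefrac{1}{k}$ rate under persistent excitation) is handled exactly as in the paper: multiply \eqref{eqn: theta tilde cumulative} by $k$ and pass to the limit using continuity of inversion at the nonsingular matrix $C_n$. Where you diverge is the first claim. The paper does not prove the GAS/weak-PE equivalence at all: it rewrites \eqref{eqn: correct order dynamics} row by row as $(\tilde{\theta}_{n,k+1}^{i})^\rmT = P_{n,k+1}P_{n,k}^{-1}(\tilde{\theta}_{n,k}^{i})^\rmT$ and invokes Theorem 3 of the cited reference \cite{bruce2021necessary}, which establishes precisely this equivalence for the vector form of RLS. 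You instead give a self-contained argument: monotonicity of $P_{n,k}$ in the Loewner order yields a limit $P_{n,\infty}\succeq 0$ and uniform boundedness of the transition map (hence Lyapunov stability unconditionally); global attractivity is then equivalent to $P_{n,\infty}=0$, which via $\eigmax[P_{n,k}]=1/\eigmin[P_{n,k}^{-1}]$ and Weyl's inequality is equivalent to weak persistent excitation; and when weak PE fails you exhibit an explicit trajectory converging to the nonzero limit $\tilde{\theta}_{n,0}P_{n,0}^{-1}P_{n,\infty}$. This is essentially the content of the cited theorem reproved from scratch, so it buys self-containedness at the cost of length; the one point you rightly flag is that your argument establishes ``Lyapunov stable and globally attractive,'' which must be checked against the exact definition of GAS used in \cite{bruce2021necessary} (they coincide for this class of linear time-varying systems, so there is no gap).
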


\noindent {\it Proof.}
    Note that, for all $1 \le i \le p$, 
    \begin{align}
    \label{eqn: ith row error dynamics}
        (\tilde{\theta}_{\hat{n},k+1}^{i})^\rmT = {P}_{\hat{n},k+1} {P}_{\hat{n},k}^{-1}  (\tilde{\theta}_{\hat{n},k}^{i})^\rmT,
    \end{align}
    where $\tilde{\theta}_{\hat{n},k}^{i} \in \BBR^{1 \times pn + m(n+1)}$ is the $i^{\rm th}$ row of $\tilde{\theta}_{\hat{n},k}$.
    It follows from Theorem 3 of \cite{bruce2021necessary} that \eqref{eqn: ith row error dynamics} is GAS if and only of $(\phi_{n,k}^\rmT)_{k=0}^\infty$ is weakly persistently exciting. 
    Hence, \eqref{eqn: correct order dynamics} is GAS if and only of $(\phi_{n,k}^\rmT)_{k=0}^\infty$ is weakly persistently exciting.
    Next, if $(\phi_{n,k}^\rmT)_{k=0}^\infty$ is persistently exciting, it follows from \eqref{eqn: theta tilde cumulative} that
    \begin{align*}
        \lim_{k \rightarrow \infty} k \tilde{\theta}_{{n},k}
    & = \lim_{k \rightarrow \infty} \tilde{\theta}_{{n},0} {P}_{{n},0}^{-1} (\frac{1}{k}{\Phi}_{{n},k-1} {\Phi}_{{n},k-1}^\rmT + \frac{1}{k}{P}_{{n},0}^{-1})^{-1}
    \\
    & = \tilde{\theta}_{{n},0} {P}_{{n},0}^{-1} C_n^{-1}. \tag*{\mbox{$\square$}} 
    \end{align*}
%
Proposition \ref{prop: reducible is necessary} show that for $(\phi_{n,k}^\rmT)_{k=0}^\infty$ to be weakly persistently exciting, it is necessary that \eqref{eqn: IO Model} be irreducible.
\begin{prop}
\label{prop: reducible is necessary}
    If \eqref{eqn: IO Model} is reducible, then $(\phi_{n,k}^\rmT)_{k=0}^\infty$ is not weakly persistently exciting.
\end{prop}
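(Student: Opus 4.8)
The plan is to show that reducibility forces every regressor $\phi_{n,k}$ to satisfy a fixed linear relation $\psi^{\rmT}\phi_{n,k}=0$ with $\psi$ a constant nonzero vector, so that the accumulated information matrix $\sum_{i=0}^{k}\phi_{n,i}\phi_{n,i}^{\rmT}$ is singular for all $k$ and its smallest eigenvalue cannot diverge; weak persistent excitation of $(\phi_{n,k}^{\rmT})_{k=0}^{\infty}$ then fails immediately. First I would convert reducibility into a lower-order recurrence obeyed by the measured outputs. By Definition \ref{defn: reducible} and the reduction at the start of the proof of Theorem \ref{theo: reducibility}, reducibility of \eqref{eqn: IO Model} yields an input/output model of order $n-1$, with coefficients $F_{1}',\dots,F_{n-1}'\in\BBR^{p\times p}$ and $G_{0}',\dots,G_{n-1}'\in\BBR^{p\times m}$, equivalent to \eqref{eqn: IO Model}. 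Running this reduced model from the same initial time with its $n-1$ initial conditions set equal to the first $n-1$ initial conditions of the true system, Definition \ref{defn: equivalence of models} gives that its output coincides with $y_{k}$ for all $k$ (this uses that the data is generated consistently by the true system, so that the initial conditions are compatible with the reduced model), whence the reduced recurrence $y_{k}=-\sum_{i=1}^{n-1}F_{i}'y_{k-i}+\sum_{i=0}^{n-1}G_{i}'u_{k-i}$ holds for every $k$ large enough.

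Next I would read the shift of this identity at index $k-1$ as a constant linear constraint on $\phi_{n,k}$. Comparing block by block with \eqref{eqn: phik defn}, the matrix $\Psi\triangleq\begin{bmatrix}-I_{p}&-F_{1}'&\cdots&-F_{n-1}'&0_{p\times m}&-G_{0}'&\cdots&-G_{n-1}'\end{bmatrix}\in\BBR^{p\times pn+m(n+1)}$ satisfies $\Psi\phi_{n,k}=y_{k-1}+\sum_{i=1}^{n-1}F_{i}'y_{k-1-i}-\sum_{i=0}^{n-1}G_{i}'u_{k-1-i}$, which vanishes for all $k$ large by the reduced recurrence. Its leading block is $-I_{p}$, so $\Psi$ has full row rank; I would pick any nonzero row $\psi^{\rmT}$ and let $N_{0}$ be an index past which $\psi^{\rmT}\phi_{n,k}=0$.

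Finally, for each $k$ I would put $A_{k}\triangleq\sum_{i=0}^{k}\phi_{n,i}\phi_{n,i}^{\rmT}=\sum_{i=0}^{k}(\phi_{n,i}^{\rmT})^{\rmT}\phi_{n,i}^{\rmT}\succeq 0$, matching the definition of weak persistent excitation applied to the row sequence $(\phi_{n,k}^{\rmT})_{k=0}^{\infty}$. For $k\ge N_{0}$ the scalar $\psi^{\rmT}A_{k}\psi=\sum_{i=0}^{N_{0}}(\psi^{\rmT}\phi_{n,i})^{2}$ does not depend on $k$, so the Rayleigh-quotient bound $\eigmin[A_{k}]\le\psi^{\rmT}A_{k}\psi/\norm{\psi}^{2}$ keeps $\eigmin[A_{k}]$ bounded; since $A_{k}$ is nondecreasing in $k$, $\lim_{k\to\infty}\eigmin[A_{k}]$ exists and is finite, hence not equal to $\infty$, and therefore $(\phi_{n,k}^{\rmT})_{k=0}^{\infty}$ is not weakly persistently exciting.

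The step I expect to require the most care is the first one: Definition \ref{defn: equivalence of models} only compares the reduced and true models along trajectories whose initial conditions are made to agree, so concluding that the \emph{actual} data trajectory obeys the reduced recurrence — equivalently, that $\psi^{\rmT}\phi_{n,k}=0$ rather than merely being a nonzero constant — relies on the true system being genuinely of the reduced order (consistent initial conditions); once this is pinned down, the $\eigmin$ estimate in the last paragraph is routine.
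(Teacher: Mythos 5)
Your overall strategy is the paper's: reducibility confines the regressor to a fixed proper subspace of $\BBR^{pn+m(n+1)}$, which caps $\eigmin\left[\sum_{i=0}^{k}\phi_{n,i}\phi_{n,i}^\rmT\right]$. The paper implements this by factoring $\phi_{n,k}=M_{n_{\rm r},n}\,[\SY_{k-n,n_{\rm r}}^\rmT \ \SU_{k-n,n}^\rmT]^\rmT$ via Lemma \ref{lem: phi = M*phi} and counting rank (so the smallest eigenvalue is exactly zero for every $N$), whereas you construct an explicit left annihilator $\Psi$ and use a Rayleigh-quotient bound (giving a bounded, not necessarily zero, smallest eigenvalue). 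That difference is cosmetic, and your final paragraph (the $\eigmin$ estimate) is correct.

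The genuine issue is precisely the step you flag: the assertion that the measured trajectory eventually obeys the reduced recurrence, i.e.\ $\Psi\phi_{n,k}=0$ for all large $k$. Definition \ref{defn: equivalence of models} only compares the two models along trajectories whose initial conditions are \emph{generated by the lower-order model}; it says nothing about a trajectory of \eqref{eqn: IO Model} started from arbitrary $y_{k_0},\dots,y_{k_0+n-1}$, and the claim is false in that generality. Concretely, take $p=1$, all $G_i=0$, $n=2$, $F_1=-3$, $F_2=2$, so $y_{k+2}=3y_{k+1}-2y_k$; by Theorem \ref{theo: reducibility} this model is reducible (to $y_{k+1}=y_k$ or to $y_{k+1}=2y_k$), yet from $y_{k_0}=0$, $y_{k_0+1}=1$ the trajectory is $y_{k_0+k}=2^k-1$, which satisfies neither first-order recurrence at any time. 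So your parenthetical ``the initial conditions are compatible with the reduced model'' is not a consequence of reducibility; it is an additional hypothesis, and without it $\psi^\rmT\phi_{n,k}$ need not vanish and the bound on $\eigmin$ evaporates. To be fair, the paper's own proof buries the same assumption: applying Lemma \ref{lem: phi = M*phi} with $n_{\rm r}$ in place of $n$ presupposes, through Proposition \ref{prop: IO j step update}, that the data is a trajectory of the reduced-order model. If you add the explicit hypothesis that the data is generated by the irreducible reduced system (which is how the surrounding discussion in the paper interprets reducibility), your argument closes; as written, the key step is asserted rather than proved.
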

\begin{proof}
    If \eqref{eqn: IO Model} is reducible, then there exists an equivalent input/output model of order $n_{\rm r} < n$. Lemma \ref{lem: phi = M*phi} implies that
    \begin{align*}
        \phi_{n,k} = M_{n_{\rm r},n} \begin{bmatrix}
        \SY_{k-n,n_{\rm r}} \\ \SU_{k-n,n}
    \end{bmatrix}.
    \end{align*}
    Hence, for all $N \ge 0$,
    \begin{align*}
        \rank(\sum_{k=0}^N \phi_{n,k} \phi_{n,k}^\rmT ) \le pn_{\rm r} + m(n+1).
    \end{align*}
    but $\sum_{k=0}^N \phi_{n,k} \phi_{n,k}^\rmT \in \BBR^{pn + m(n+1) \times pn + m(n+1)}$. Therefore, for all $N \ge 0$, $\eigmin \left[ \sum_{k=0}^N \phi_{n,k} \phi_{n,k}^\rmT \right] = 0$.
\end{proof}
Note that if \eqref{eqn: IO Model} is reducible, then there exists an equivalent input/output model of a lower-order which is irreducible.
In other words, \eqref{eqn: IO Model} being reducible can be viewed as the order of the identified model being higher than the order of the true model.
This case is addressed in the following subsection.

\subsection{Convergence with Higher Model Order}

Next, we address the case $\hat{n} > n$, where the identified model order is higher than the model order of the true system.
To begin, note that a model of order $\hat{n}$ which is equivalent to \eqref{eqn: IO Model} is given by the coefficients\footnote{The trivial equivalent model \eqref{eqn: theta nhat true} is chosen because it can be written in terms of only the true model coefficients.}
\begin{align}
\label{eqn: theta nhat true}
    \theta_{\hat{n},{\rm true}} = \begin{bmatrix}
        \SF_n & 0_{p \times p(\hat{n}-n)} & \SG_n & 0_{p \times m(\hat{n}-n)}
    \end{bmatrix}.
\end{align}
In particular, it holds that, for all $k \ge 0$, 
\begin{align}
\label{eqn: yk = thetatrue nhat phi}
    y_k = \theta_{\hat{n},{\rm true}} \phi_{\hat{n},k}.
\end{align}
Since $\hat{n} > n$, it follows from Proposition \ref{prop: reducible is necessary} that $(\phi_{\hat{n},k}^\rmT)_{k=0}^\infty$ is not weakly persistently exciting. 
However, it is possible that $(\phi_{n,\hat{n},k}^\rmT)_{k=0}^\infty$ is weakly persistently exciting, where, for all $k \ge 0$, $\phi_{n,\hat{n},k} \in \BBR^{pn + m(\hat{n}+1)}$ is defined
\begin{align}
\label{eqn: phik n nhat defn}
    \phi_{n,\hat{n},k} \triangleq \begin{bmatrix}
        -y_{k-\hat{n}+n-1} \\ \vdots \\ -y_{k-\hat{n}} \\ u_k \\ \vdots \\ u_{k-\hat{n}}
    \end{bmatrix}
    = \begin{bmatrix}
        \SY_{k-\hat{n},n} \\ \SU_{k-\hat{n},\hat{n}}
    \end{bmatrix}.
\end{align}
By similar reasoning to Proposition \ref{prop: reducible is necessary}, for $(\phi_{n,\hat{n},k}^\rmT)_{k=0}^\infty$ to be weakly persistently exciting, it is necessary that \eqref{eqn: IO Model} be irreducible.
Lemma \ref{lem: phi = M*phi} shows that $\phi_{\hat{n},k} = M_{n,\hat{n}} \phi_{n,\hat{n},k}$ where $M_{n,\hat{n}}$ is defined in \eqref{eqn: M defn}.
\begin{lem}
For all $k \ge 0$ and $\hat{n} > n$, 
\label{lem: phi = M*phi}
    \begin{align}
    \label{eqn: phi = M*phi}
        \phi_{\hat{n},k} &= M_{n,\hat{n}} \phi_{n,\hat{n},k}.
    \end{align}
\end{lem}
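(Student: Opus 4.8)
**Proof proposal for Lemma \ref{lem: phi = M*phi}.**

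The plan is to unwind both sides of \eqref{eqn: phi = M*phi} into stacks of inputs and outputs and match blocks. Recall from \eqref{eqn: phik defn} that $\phi_{\hat n,k} = \begin{bmatrix} \SY_{k-\hat n,\hat n} \\ \SU_{k-\hat n,\hat n}\end{bmatrix}$, whose output part $\SY_{k-\hat n,\hat n}$ is the column $[\,y_{k-1}^\rmT\ \cdots\ y_{k-\hat n}^\rmT\,]^\rmT$, while from \eqref{eqn: phik n nhat defn} the vector $\phi_{n,\hat n,k}$ has output part $\SY_{k-\hat n,n} = [\,y_{k-1}^\rmT\ \cdots\ y_{k-\hat n+n}^\rmT\,]^\rmT$ (only $n$ most recent outputs, ending at $y_{k-\hat n+n}$, not $y_{k-\hat n}$) and the same input part $\SU_{k-\hat n,\hat n}$. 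So the input blocks agree trivially, and the content of the lemma is that the $\hat n$ stacked outputs $y_{k-1},\dots,y_{k-\hat n}$ can be recovered linearly from the $n$ outputs $y_{k-1},\dots,y_{k-\hat n+n}$ together with the inputs $\SU_{k-\hat n,\hat n}$.

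First I would split $\SY_{k-\hat n,\hat n}$ into its top $n$ rows, which are exactly $\SY_{k-\hat n+(\hat n-n),n}$, i.e.\ $\SY_{k'-n+n,\dots}$ — more precisely the top block is $[\,y_{k-1}^\rmT\ \cdots\ y_{k-\hat n+n}^\rmT\,]^\rmT = \SY_{k-\hat n,n}$ shifted appropriately; these rows are reproduced by the identity block $I_{pn}$ in the first block-row of $M_{n,\hat n}^1$'s complement — that is, they correspond to the $I_{pn}$ row of $M_{n,\hat n}$ acting on the output part of $\phi_{n,\hat n,k}$, so this block matches by inspection. The remaining $\hat n - n$ output rows of $\SY_{k-\hat n,\hat n}$ are $y_{k-\hat n+n-1},\dots,y_{k-\hat n}$. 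For each such $y_{k-\hat n+\ell}$ with $0\le \ell\le \hat n-n-1$, I would apply Proposition \ref{prop: IO j step update} (the output transition equation \eqref{eqn: IO j step update}) with base time $k-\hat n$ and step $j=\ell$: this gives $y_{k-\hat n+n+\ell'}= -\SF_{n,\ell'}\SY_{k-\hat n,n} + \SG_{n,\ell'}\SU_{k-\hat n,n+\ell'}$. Re-indexing so the rows line up with the listed order $y_{k-\hat n+n-1},\dots,y_{k-\hat n+1},y_{k-\hat n}$ produces exactly the block rows $-\SF_{n,\hat n-n-1},\dots,-\SF_{n,1},-\SF_{n,0}$ (the columns of $M_{n,\hat n}^1$) hitting $\SY_{k-\hat n,n}$, plus the padded blocks $[\,0\ \SG_{n,\hat n-n-1}\,],\dots,[\,0\ \SG_{n,0}\,]$ (the rows of $M_{n,\hat n}^2$) hitting $\SU_{k-\hat n,\hat n}$, where the zero padding is just the statement $\SG_{n,\ell'}\SU_{k-\hat n,n+\ell'} = [\,0_{p\times(\hat n-n-\ell')m}\ \SG_{n,\ell'}\,]\SU_{k-\hat n,\hat n}$ since $\SU_{k-\hat n,n+\ell'}$ is a trailing sub-vector of $\SU_{k-\hat n,\hat n}$. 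Finally, assembling the top $I_{pn+m(\hat n+1)}$ block (which reproduces $\SY_{k-\hat n,n}$ and $\SU_{k-\hat n,\hat n}$ untouched) with the $\hat n - n$ transition rows yields precisely $M_{n,\hat n}\phi_{n,\hat n,k} = \phi_{\hat n,k}$.

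The only real bookkeeping obstacle is the index alignment: making sure the ordering convention of the stacked vectors (most-recent on top) matches the top-to-bottom ordering of the block rows of $M_{n,\hat n}^1$ and $M_{n,\hat n}^2$, and that the shift amount $j$ fed into Proposition \ref{prop: IO j step update} is the right one for each row. Everything else is substitution. I would write the argument as a direct block computation rather than an induction, since Proposition \ref{prop: IO j step update} already packages the induction on $j$; a one-line remark that the input sub-vector relation $\SU_{k-\hat n,n+j}$ versus $\SU_{k-\hat n,\hat n}$ accounts for the zero-padding pattern in $M_{n,\hat n}^2$ completes the verification.
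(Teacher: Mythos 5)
Your overall strategy is the same as the paper's: decompose $\phi_{\hat n,k}$ into retained outputs, reconstructed outputs, and inputs, apply Proposition \ref{prop: IO j step update} to the reconstructed ones, and account for the zero padding via the relation between $\SU_{k-\hat n,n+j}$ and $\SU_{k-\hat n,\hat n}$. However, your identification of the blocks is inverted, and as written the argument would fail. By \eqref{eqn: SF and SY definition}, $\SY_{k-\hat n,n}=[\,y_{k-\hat n+n-1}^\rmT\ \cdots\ y_{k-\hat n}^\rmT\,]^\rmT$ is the stack of the $n$ \emph{oldest} outputs among $y_{k-1},\dots,y_{k-\hat n}$, not the $n$ most recent as you claim. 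Consequently the identity block $I_{pn}$ of $M_{n,\hat n}$ copies the \emph{oldest} $n$ outputs into the bottom of the output stack of $\phi_{\hat n,k}$, and the top block row $[\,M_{n,\hat n}^1\ M_{n,\hat n}^2\,]$ must reconstruct the $\hat n-n$ \emph{most recent} outputs $y_{k-1},\dots,y_{k-\hat n+n}$, i.e.\ $\SY_{k-\hat n+n,\hat n-n}$. You assign these roles backwards and assert that the transition rows produce $y_{k-\hat n+n-1},\dots,y_{k-\hat n}$ from the newer outputs; that would require running the recursion \eqref{eqn: IO Model} backwards in time, which Proposition \ref{prop: IO j step update} does not provide (it only expresses a \emph{later} output $y_{k+n+j}$ in terms of earlier data). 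There is also an internal inconsistency: the set you say needs reconstructing has $n$ elements, but you then index it by $0\le\ell\le\hat n-n-1$.

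The saving grace is that the displayed identity you actually invoke, $y_{k-\hat n+n+\ell'}=-\SF_{n,\ell'}\SY_{k-\hat n,n}+\SG_{n,\ell'}\SU_{k-\hat n,n+\ell'}$ for $0\le\ell'\le\hat n-n-1$, is the correct forward application of Proposition \ref{prop: IO j step update} and produces exactly the recent outputs $y_{k-\hat n+n},\dots,y_{k-1}$; it contradicts your surrounding prose rather than supports it. Once you swap the roles of the two blocks — identity acting on the old outputs, transition rows generating the new ones, with $\ell'=\hat n-n-i$ matching the $i$-th block row of $M_{n,\hat n}^1$ and $M_{n,\hat n}^2$ — your computation becomes precisely the paper's proof. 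The zero-padding remark for $M_{n,\hat n}^2$ is correct as stated.
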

\begin{proof}
Note that $\phi_{\hat{n},k}$ can be written as
    \begin{align*}
        \phi_{\hat{n},k} = \begin{bmatrix}
            \SY_{k-\hat{n},\hat{n}} \\ \SY_{k-\hat{n},\hat{n}}
        \end{bmatrix}
        =
        \begin{bmatrix}
            \SY_{k-\hat{n}+n,\hat{n}-n} \\ \SY_{k-\hat{n},n} \\ \SU_{k-\hat{n},\hat{n}}
        \end{bmatrix}
    \end{align*}
    For all $1 \le i \le \hat{n} - n$, note that $k-i = (k-\hat{n}) + n + (\hat{n}-n-i)$ and it follows from Proposition \ref{prop: IO j step update} that
    \begin{align*}
        & y_{ k-i}  
        = -\SF_{n,\hat{n}-n-i} \SY_{k-\hat{n},n} + \SG_{n,\hat{n}-n-i} \SU_{k-\hat{n},\hat{n}-i}
        \\
        & \ \ = -\SF_{n,\hat{n}-n-i} \SY_{k-\hat{n},n} 
        + \begin{bmatrix} 0_{p \times im} &  \SG_{n,\hat{n}-n-i} \end{bmatrix} \SU_{k-\hat{n},\hat{n}-i}.
    \end{align*}
    Hence,
    \begin{align*}
        \SY_{k-\hat{n}+n,\hat{n}-n} = \begin{bmatrix}
            M_{n,\hat{n}}^1 & M_{n,\hat{n}}^2
        \end{bmatrix}
        \begin{bmatrix}
            \SY_{k-\hat{n},n} \\ \SU_{k-\hat{n},\hat{n}}
        \end{bmatrix},
    \end{align*}
    and \eqref{eqn: phi = M*phi} follows from \eqref{eqn: M defn}.
\end{proof}

Proposition \ref{prop: equivalent theta} shows that input/output model \eqref{eqn: IO Model 2} with coefficients $\hat{\theta}_{\hat{n}}$ is equivalent to \eqref{eqn: IO Model} if and only if \eqref{eqn: equivalent theta} holds.
Next, Proposition \ref{prop: min norm equivalent theta} give an explicit formulation for the equivalent model of order $\hat{n}$ which minimizes the regularization term $\tr \left[(\hat{\theta}_{\hat{n}}-{\theta}_{\hat{n},0}) {P}_{\hat{n},0}^{-1} (\hat{\theta}_{\hat{n}}-{\theta}_{\hat{n},0})^\rmT \right]$ of the of cost function $J_{\hat{n},k}$, defined in \eqref{eqn: PCAC cost}.

\begin{prop}
\label{prop: equivalent theta}
    Input/Output models \eqref{eqn: IO Model} and \eqref{eqn: IO Model 2} are equivalent if and only if
    \begin{align}
    \label{eqn: equivalent theta}
        (\hat{\theta}_{\hat{n}} - {\theta}_{\hat{n},{\rm true}}) M_{n,\hat{n}} = 0_{p \times pn + m(\hat{n}+1)}.
    \end{align}
\end{prop}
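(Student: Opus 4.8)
The plan is to deduce this from Theorem~\ref{theo: IO Equivalence condition} by reading that theorem as an affine constraint and comparing the candidate coefficients $\hat\theta_{\hat n}$ against the one equivalent order-$\hat n$ model we already have in hand, namely $\theta_{\hat n,{\rm true}}$ from \eqref{eqn: theta nhat true}. The structural fact to exploit is that the equivalence condition $[\,-\hat\SF_{\hat n}\ \ \hat\SG_{\hat n}\,]\,M_{n,\hat n} = [\,-\SF_{n,\hat n-n}\ \ \SG_{n,\hat n-n}\,]$ of Theorem~\ref{theo: IO Equivalence condition} is linear in $\hat\theta_{\hat n}=[\,\hat\SF_{\hat n}\ \ \hat\SG_{\hat n}\,]$, with a constraint matrix $M_{n,\hat n}$ and a right-hand side $[\,-\SF_{n,\hat n-n}\ \ \SG_{n,\hat n-n}\,]$ that depend only on the true system \eqref{eqn: IO Model}. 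Consequently, any two sets of order-$\hat n$ coefficients that are both equivalent to \eqref{eqn: IO Model} differ by a matrix whose product with $M_{n,\hat n}$ on the right vanishes, which is precisely \eqref{eqn: equivalent theta}.

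Concretely, I would first record that $\theta_{\hat n,{\rm true}}$ indeed defines an order-$\hat n$ model equivalent to \eqref{eqn: IO Model}; this is the assertion behind \eqref{eqn: yk = thetatrue nhat phi} and follows from Definition~\ref{defn: equivalence of models} by a short strong induction, since appending the zero coefficients $\hat F_{n+1}=\cdots=\hat F_{\hat n}=0$ and $\hat G_{n+1}=\cdots=\hat G_{\hat n}=0$ leaves the defining recursion unchanged. Hence Theorem~\ref{theo: IO Equivalence condition}, applied with $\hat\theta_{\hat n}$ replaced by $\theta_{\hat n,{\rm true}}$, gives the identity $[\,-\SF_n\ \ 0\ \ \SG_n\ \ 0\,]\,M_{n,\hat n}=[\,-\SF_{n,\hat n-n}\ \ \SG_{n,\hat n-n}\,]$ (the first $p\hat n$ columns sign-flipped in accordance with the theorem's convention). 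Subtracting this from the corresponding identity for $\hat\theta_{\hat n}$ shows that \eqref{eqn: IO Model} and \eqref{eqn: IO Model 2} are equivalent if and only if $\big(\hat\theta_{\hat n}-\theta_{\hat n,{\rm true}}\big)M_{n,\hat n}=0$, and running the computation in reverse --- using the $\theta_{\hat n,{\rm true}}$ identity to recover the right-hand side --- gives the converse.

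I expect the only real difficulty to be bookkeeping: checking that $\theta_{\hat n,{\rm true}}$ meets the hypotheses of Theorem~\ref{theo: IO Equivalence condition}, and keeping the column-sign conventions of $\hat\theta_{\hat n}$, $\theta_{\hat n,{\rm true}}$, and $M_{n,\hat n}$ straight so that the subtraction really produces $(\hat\theta_{\hat n}-\theta_{\hat n,{\rm true}})M_{n,\hat n}$ rather than something with a sub-block of sign flips. A self-contained alternative that bypasses Theorem~\ref{theo: IO Equivalence condition} is to unwind Definition~\ref{defn: equivalence of models} directly: by induction the models are equivalent if and only if $\hat\theta_{\hat n}\phi_{\hat n,k}=y_k$ for every admissible input/initial-condition sequence and every time step; since $y_k=\theta_{\hat n,{\rm true}}\phi_{\hat n,k}$ by \eqref{eqn: yk = thetatrue nhat phi} and $\phi_{\hat n,k}=M_{n,\hat n}\phi_{n,\hat n,k}$ by Lemma~\ref{lem: phi = M*phi}, this reads $(\hat\theta_{\hat n}-\theta_{\hat n,{\rm true}})M_{n,\hat n}\phi_{n,\hat n,k}=0$ for all admissible data, and choosing the $n$ free initial conditions and the inputs so that $\phi_{n,\hat n,k}$ ranges over a basis of $\BBR^{pn+m(\hat n+1)}$ then yields \eqref{eqn: equivalent theta}.
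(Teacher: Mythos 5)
Your proposal is correct and takes essentially the same route as the paper: observe that $\theta_{\hat{n},{\rm true}}$ is trivially an equivalent order-$\hat{n}$ model, apply Theorem~\ref{theo: IO Equivalence condition} to it and to a general $\hat{\theta}_{\hat{n}}$, and subtract the two identities so that equivalence becomes $\hat{\theta}_{\hat{n}} M_{n,\hat{n}} = \theta_{\hat{n},{\rm true}} M_{n,\hat{n}}$. The sign-convention bookkeeping you flag (the $-\hat{\SF}_{\hat{n}}$ block in Theorem~\ref{theo: IO Equivalence condition} versus the unsigned blocks of $\hat{\theta}_{\hat{n}}$) is a legitimate concern that the paper's own proof also glosses over, and your regressor-based alternative via Lemma~\ref{lem: phi = M*phi} is a clean way to confirm the stated form of \eqref{eqn: equivalent theta}.
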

\begin{proof}
    Note that an input/output model of order $\hat{n}$ with coefficients ${\theta}_{\hat{n},{\rm true}}$ is trivially equivalent to \eqref{eqn: IO Model}. Hence, by Theorem \ref{theo: IO Equivalence condition}, 
    ${\theta}_{\hat{n},{\rm true}} M_{n,\hat{n}} 
        =
        \begin{bmatrix}
            -{\SF}_{n,\hat{n}-n} & {\SG}_{n,\hat{n}-n}
        \end{bmatrix}$.
    Then, it also follows from Theorem \ref{theo: IO Equivalence condition} that \eqref{eqn: IO Model 2} is equivalent to \eqref{eqn: IO Model} if and only of $\hat{\theta}_{\hat{n}} M_{n,\hat{n}} = \begin{bmatrix}
            -{\SF}_{n,\hat{n}-n} & {\SG}_{n,\hat{n}-n}
        \end{bmatrix} = {\theta}_{\hat{n},{\rm true}} M_{n,\hat{n}} $.
\end{proof}

\begin{prop}
\label{prop: min norm equivalent theta}
    The constrained optimization problem
    \begin{align}
        \min_{\hat{\theta}_{\hat{n}} \in \BBR^{p \times \hat{n}(m+p) + m}} & \tr \left[ (\hat{\theta}_{\hat{n}} - {\theta}_{\hat{n},0}) P_{\hat{n},0}^{-1} (\hat{\theta}_{\hat{n}} - {\theta}_{\hat{n},0})^\rmT \right],
        \label{eqn: constrained optimization min equivalent model}
        \\
        \textnormal{such that } & (\hat{\theta}_{\hat{n}} - {\theta}_{\hat{n},{\rm true}}) M_{n,\hat{n}} = 0_{p \times pn + m(\hat{n}+1)}, \nonumber
    \end{align}
    has the unique solution
    \begin{align}
    \label{eqn: theta star}
        \theta^*_{\hat{n}} \triangleq {\theta}_{\hat{n},0} 
        + ({\theta}_{\hat{n},{\rm true}} - {\theta}_{\hat{n},0}) H_{\hat{n}},
    \end{align}
    where hat matrix $H_{\hat{n}} \in \BBR^{p\hat{n} + m(\hat{n}+1) \times p\hat{n} + m(\hat{n}+1)}$ is defined
    \begin{align}
        H_{\hat{n}} \triangleq M_{n,\hat{n}} (M_{n,\hat{n}}^\rmT P_{\hat{n},0} M_{n,\hat{n}})^{-1} M_{n,\hat{n}}^\rmT P_{\hat{n},0}.
    \end{align}
\end{prop}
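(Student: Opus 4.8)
The plan is to treat \eqref{eqn: constrained optimization min equivalent model} as a strictly convex quadratic program with affine equality constraints, for which the stationarity (Lagrange/KKT) condition is both necessary and sufficient for global optimality. First I would note that the feasible set is nonempty: the trivial equivalent model $\theta_{\hat{n},{\rm true}}$ satisfies the constraint by construction (cf.\ Proposition \ref{prop: equivalent theta}), so the problem is well posed; and since $P_{\hat{n},0}^{-1}$ is positive definite, the objective $\hat\theta_{\hat n} \mapsto \tr[(\hat\theta_{\hat n} - \theta_{\hat{n},0}) P_{\hat{n},0}^{-1}(\hat\theta_{\hat n}-\theta_{\hat{n},0})^\rmT]$ is strictly convex, so a minimizer, once shown to exist, is unique.

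Next I would observe that both the cost and the constraint decouple across the $p$ rows of $\hat\theta_{\hat n}$: the trace equals $\sum_{i=1}^p (\hat\theta_{\hat n}-\theta_{\hat{n},0})^i P_{\hat{n},0}^{-1} ((\hat\theta_{\hat n}-\theta_{\hat{n},0})^i)^\rmT$, and the constraint holds iff $(\hat\theta_{\hat n}-\theta_{\hat{n},{\rm true}})^i M_{n,\hat n}=0$ for each $i$. Writing $x^i \isdef ((\hat\theta_{\hat n} - \theta_{\hat{n},0})^i)^\rmT$ and $c^i \isdef (((\theta_{\hat{n},{\rm true}} - \theta_{\hat{n},0}) M_{n,\hat n})^i)^\rmT$, each row problem is the weighted minimum-norm problem $\min\, x^{i\rmT} P_{\hat{n},0}^{-1} x^i$ subject to $M_{n,\hat n}^\rmT x^i = c^i$. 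Introducing a multiplier and setting the gradient of the Lagrangian to zero gives $P_{\hat{n},0}^{-1} x^i = M_{n,\hat n}\lambda^i$, hence $x^i = P_{\hat{n},0} M_{n,\hat n}\lambda^i$; substituting into the constraint yields $(M_{n,\hat n}^\rmT P_{\hat{n},0} M_{n,\hat n})\lambda^i = c^i$.

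The one point that genuinely needs justification — and which I expect to be the only real obstacle — is that $M_{n,\hat n}^\rmT P_{\hat{n},0} M_{n,\hat n}$ is invertible. This follows because $M_{n,\hat n}$ has full column rank: by \eqref{eqn: M defn} its lower block is the identity $I_{pn+m(\hat n+1)}$, so $M_{n,\hat n} v = 0$ forces $v=0$; combined with $P_{\hat{n},0}\succ 0$ this makes $M_{n,\hat n}^\rmT P_{\hat{n},0} M_{n,\hat n}$ positive definite, hence invertible. Thus $\lambda^i = (M_{n,\hat n}^\rmT P_{\hat{n},0} M_{n,\hat n})^{-1} c^i$ and $x^{i\rmT} = c^{i\rmT}(M_{n,\hat n}^\rmT P_{\hat{n},0} M_{n,\hat n})^{-1} M_{n,\hat n}^\rmT P_{\hat{n},0}$.

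Finally I would reassemble the rows: stacking the $x^{i\rmT}$ gives $\hat\theta_{\hat n} - \theta_{\hat{n},0} = (\theta_{\hat{n},{\rm true}} - \theta_{\hat{n},0}) M_{n,\hat n}(M_{n,\hat n}^\rmT P_{\hat{n},0} M_{n,\hat n})^{-1} M_{n,\hat n}^\rmT P_{\hat{n},0} = (\theta_{\hat{n},{\rm true}} - \theta_{\hat{n},0}) H_{\hat n}$, i.e.\ $\hat\theta_{\hat n} = \theta^*_{\hat n}$ as in \eqref{eqn: theta star}. To close the argument I would verify directly that $\theta^*_{\hat n}$ is feasible, using the identity $H_{\hat n} M_{n,\hat n} = M_{n,\hat n}$ (immediate from the definition of $H_{\hat n}$), so that $(\theta^*_{\hat n} - \theta_{\hat{n},{\rm true}})M_{n,\hat n} = (\theta_{\hat{n},0} - \theta_{\hat{n},{\rm true}})M_{n,\hat n} + (\theta_{\hat{n},{\rm true}}-\theta_{\hat{n},0})M_{n,\hat n} = 0$; since $\theta^*_{\hat n}$ is feasible and satisfies the stationarity condition, convexity makes it the global minimizer, and uniqueness was already established.
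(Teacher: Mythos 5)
Your proposal is correct and follows essentially the same route as the paper: decompose the trace into $p$ row-wise quadratic programs with affine constraints, write down the stationarity system (the paper invokes Lemma \ref{lem: quadratic minimization with affine constraints} for this), eliminate the multiplier, and reassemble the rows to obtain $\theta^*_{\hat{n}}$. Your explicit justification that $M_{n,\hat{n}}^\rmT P_{\hat{n},0} M_{n,\hat{n}}$ is invertible (via the identity block in $M_{n,\hat{n}}$ giving full column rank) is a detail the paper's proof takes for granted, and is worth having.
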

\begin{proof}
    Note that 
    \begin{align*}
        \tr \big[ (\hat{\theta}_{\hat{n}} - {\theta}_{\hat{n},0}) & P_{\hat{n},0}^{-1} (\hat{\theta}_{\hat{n}} - {\theta}_{\hat{n},0})^\rmT \big] 
        \\
        &=
        \sum_{i=1}^p (\hat{\theta}_{\hat{n},i} - {\theta}_{\hat{n},0,i}) P_{\hat{n},0}^{-1} (\hat{\theta}_{\hat{n},i} - {\theta}_{\hat{n},0,i})^\rmT,
    \end{align*}
    where, for all $1 \le i \le p$, $\hat{\theta}_{\hat{n},i} \in \BBR^{1 \times \hat{n}(m+p) + m}$ and ${\theta}_{\hat{n},0,i} \in \BBR^{1 \times \hat{n}(m+p) + m}$ are the $i^{\rm th}$ row of $\hat{\theta}_{\hat{n}}$ and ${\theta}_{\hat{n},0}$, respectively.
    It then follows that \eqref{eqn: constrained optimization min equivalent model} can be written as $p$ separate optimization problems given by, for all $1 \le i \le p$,
    \begin{align}
    \label{eqn: ith row constrained opt}
        \min_{\hat{\theta}_{\hat{n},i} \in \BBR^{1 \times \hat{n}(m+p) + m}} & (\hat{\theta}_{\hat{n},i} - {\theta}_{\hat{n},0,i}) P_{\hat{n},0}^{-1} (\hat{\theta}_{\hat{n},i} - {\theta}_{\hat{n},0,i})^\rmT,
        \\
        \textnormal{such that } & M_{n,\hat{n}}^\rmT \hat{\theta}_{\hat{n},i}^\rmT  = M_{n,\hat{n}}^\rmT {\theta}_{\hat{n},{\rm true},i}^\rmT , \nonumber
    \end{align}
    where, for all $1 \le i \le p$, ${\theta}_{\hat{n},{\rm true},i} \in \BBR^{1 \times \hat{n}(m+p) + m}$ is the $i^{\rm th}$ row of ${\theta}_{\hat{n},{\rm true},i}$.
    It then follows from Lemma \ref{lem: quadratic minimization with affine constraints} that, for all $1 \le i \le p$, \eqref{eqn: ith row constrained opt} has a unique solution ${\theta}_{\hat{n},i}^*$, given by
    \begin{align}
        M_{n,\hat{n}}^\rmT {\theta_{\hat{n},i}^*}^\rmT &= M_{n,\hat{n}}^\rmT \theta_{\hat{n},{\rm true},i}^\rmT,
        \label{eqn: quad opt temp 1.1}
        \\
        P_{\hat{n},0}^{-1}{\theta_{\hat{n},i}^*}^\rmT + M_{n,\hat{n}} {\nu_i^*}^\rmT &= P_{\hat{n},0}^{-1} \theta_{\hat{n},0,i}^\rmT.
        \label{eqn: quad opt temp 1.2}
    \end{align}
    %
    It follows from \eqref{eqn: quad opt temp 1.2} that
    \begin{align}
    \label{eqn: quad opt temp 2}
        {\theta_{\hat{n},i}^*}^\rmT = \theta_{\hat{n},0,i}^\rmT - P_{\hat{n},0} M_{n,\hat{n}} {\nu_i^*}^\rmT.
    \end{align}
    Substituting \eqref{eqn: quad opt temp 2} into \eqref{eqn: quad opt temp 1.1}, it follows that
    \begin{align}
         M_{n,\hat{n}}^\rmT ( \theta_{\hat{n},0,i}^\rmT - P_{\hat{n},0} M_{n,\hat{n}} {\nu_i^*}^\rmT) = M_{n,\hat{n}}^\rmT \theta_{\hat{n},{\rm true},i}^\rmT.
    \end{align}
    Hence, ${\nu_i^*}^\rmT$ is given as
    \begin{align}
    \label{eqn: quad opt temp 3}
         {\nu_i^*}^\rmT =  ( M_{n,\hat{n}}^\rmT P_{\hat{n},0} M_{n,\hat{n}})^{-1} M_{n,\hat{n}}^\rmT ( \theta_{\hat{n},0,i}^\rmT - \theta_{\hat{n},{\rm true},i}^\rmT).
    \end{align}
    Substituting \eqref{eqn: quad opt temp 3} into \eqref{eqn: quad opt temp 2}, it follows that
    ${\theta_{\hat{n},i}^*}^\rmT = \theta_{\hat{n},0,i}^\rmT - P_{\hat{n},0} M_{n,\hat{n}} ( M_{n,\hat{n}}^\rmT P_{\hat{n},0} M_{n,\hat{n}})^{-1} M_{n,\hat{n}}^\rmT ( \theta_{\hat{n},0,i}^\rmT - \theta_{\hat{n},{\rm true},i}^\rmT).$
    Finally, taking the transpose and noting that ${\theta_{\hat{n},i}^*}$ is the $i^{\rm th}$ row of $\theta_{\hat{n}}^*$ yields \eqref{eqn: theta star}.
\end{proof}

As mentioned previously, the input/output model \eqref{eqn: IO Model 2} with coefficients given by $\theta_{\hat{n}}^*$ minimizes the regularization term of $J_{\hat{n},k}$ over the set of input/output models of order $\hat{n}$ equivalent to \eqref{eqn: IO Model}.
Theorem \ref{theo: higher-order model convergence} show that if $(\phi_{n,\hat{n},k}^\rmT)_{k=0}^\infty$ is weakly persistently exciting, then $\theta_{\hat{n},k}$ converges to $\theta_{\hat{n}}^*$.
Moreover, if $(\phi_{n,\hat{n},k}^\rmT)_{k=0}^\infty$ is persistently exciting, then the convergence of $\theta_{\hat{n},k} - \theta_{\hat{n}}^*$ to zero is asymptotically proportional to $\nicefrac{1}{k}$.

\begin{theo}
\label{theo: higher-order model convergence}
    Consider the assumptions and notation of Proposition \ref{prop: PCAC RLS}.
    If $(\phi_{n,\hat{n},k}^\rmT)_{k=0}^\infty$ is weakly persistently exciting, then 
    \begin{align}
    \label{eqn: higher-order model convergence WPE}
        \lim_{k \rightarrow \infty} \theta_{\hat{n},k} = \theta_{\hat{n}}^*,
    \end{align}
    where $\theta_{\hat{n}}^*$ is defined in \eqref{eqn: theta star}. If, additionally, $(\phi_{n,\hat{n},k}^\rmT)_{k=0}^\infty$ is persistently exciting, then
    \begin{align}
    \label{eqn: higher-order model convergence PE}
         & \lim_{k \rightarrow \infty} k(\theta_{\hat{n},k} - \theta_{\hat{n}}^*) \nonumber
        \\
        & \quad = (\theta_{\hat{n},0} - \theta_{\hat{n},{\rm true}}) M_{n,\hat{n}} W_{n,\hat{n}}^{-1} C_{n,\hat{n}}^{-1} W_{n,\hat{n}}^{-1}  M_{n,\hat{n}}^\rmT P_{\hat{n},0}, 
    \end{align}
    where $C_{n,\hat{n}} \triangleq \lim_{k \rightarrow \infty} \frac{1}{k} \sum_{k=0}^{k-1} \phi_{n,\hat{n},i} \phi_{n,\hat{n},i}^\rmT$ and $W_{n,\hat{n}} \triangleq M_{n,\hat{n}}^\rmT P_{\hat{n},0} M_{n,\hat{n}}$.
\end{theo}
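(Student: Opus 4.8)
The plan is to start from the batch formula \eqref{eqn: batch IO ID} for $\theta_{\hat n,k}$ and use the rank-deficient structure $\phi_{\hat n,k} = M_{n,\hat n}\phi_{n,\hat n,k}$ from Lemma \ref{lem: phi = M*phi} to rewrite everything in terms of the ``compressed'' regressor $\phi_{n,\hat n,k}$, which is the object assumed to be (weakly) persistently exciting. Writing $\Phi_{\hat n,k} = M_{n,\hat n}\Psi_k$ where $\Psi_k \triangleq \begin{bmatrix}\phi_{n,\hat n,k-1}&\cdots&\phi_{n,\hat n,0}\end{bmatrix}$, and using $\SY_{0,k} = \theta_{\hat n,{\rm true}}\Phi_{\hat n,k} = \theta_{\hat n,{\rm true}}M_{n,\hat n}\Psi_k$ (from \eqref{eqn: yk = thetatrue nhat phi}), the batch formula becomes
\begin{align*}
    \theta_{\hat n,k} = \big(\theta_{\hat n,{\rm true}}M_{n,\hat n}\Psi_k\Psi_k^\rmT M_{n,\hat n}^\rmT + \theta_{\hat n,0}P_{\hat n,0}^{-1}\big)\big(M_{n,\hat n}\Psi_k\Psi_k^\rmT M_{n,\hat n}^\rmT + P_{\hat n,0}^{-1}\big)^{-1}.
\end{align*}
Subtracting $\theta_{\hat n,0}$ and factoring, I get $\theta_{\hat n,k} - \theta_{\hat n,0} = (\theta_{\hat n,{\rm true}}-\theta_{\hat n,0})M_{n,\hat n}\Psi_k\Psi_k^\rmT M_{n,\hat n}^\rmT (M_{n,\hat n}\Psi_k\Psi_k^\rmT M_{n,\hat n}^\rmT + P_{\hat n,0}^{-1})^{-1}$. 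The goal is to take $k\to\infty$ in this expression; since $\eigmin[\Psi_k\Psi_k^\rmT]\to\infty$, the matrix $\Psi_k\Psi_k^\rmT$ ``dominates'' and the limit of the trailing factor should pick out exactly the hat-matrix $H_{\hat n}$.

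The key algebraic step is to handle the singular limit cleanly. I would set $R_k \triangleq \Psi_k\Psi_k^\rmT \in \BBR^{(pn+m(\hat n+1))\times(pn+m(\hat n+1))}$, which by weak persistent excitation is eventually positive definite with $\eigmin[R_k]\to\infty$, and rewrite
\begin{align*}
    M_{n,\hat n}R_k M_{n,\hat n}^\rmT(M_{n,\hat n}R_k M_{n,\hat n}^\rmT + P_{\hat n,0}^{-1})^{-1} = M_{n,\hat n}R_k(M_{n,\hat n}^\rmT P_{\hat n,0}^{-1} + R_k M_{n,\hat n}^\rmT P_{\hat n,0}^{-1}\cdot(\ldots))^{-1}\ldots
\end{align*}
— more precisely, I would use the push-through identity $A(AB+C)^{-1} = (A^{-1}C + B)^{-1}\cdot\text{(something)}$ carefully, or better: apply the matrix inversion lemma to $(M_{n,\hat n}R_kM_{n,\hat n}^\rmT + P_{\hat n,0}^{-1})^{-1} = P_{\hat n,0} - P_{\hat n,0}M_{n,\hat n}(R_k^{-1} + M_{n,\hat n}^\rmT P_{\hat n,0}M_{n,\hat n})^{-1}M_{n,\hat n}^\rmT P_{\hat n,0}$, valid once $R_k$ is invertible. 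Multiplying by $M_{n,\hat n}R_kM_{n,\hat n}^\rmT$ on the left and simplifying, then letting $R_k^{-1}\to 0$, the $(R_k^{-1}+W_{n,\hat n})^{-1}$ term tends to $W_{n,\hat n}^{-1}$ where $W_{n,\hat n} = M_{n,\hat n}^\rmT P_{\hat n,0}M_{n,\hat n}$, and after cancellation the whole trailing factor converges to $M_{n,\hat n}W_{n,\hat n}^{-1}M_{n,\hat n}^\rmT P_{\hat n,0} = H_{\hat n}$. This gives $\lim_k(\theta_{\hat n,k}-\theta_{\hat n,0}) = (\theta_{\hat n,{\rm true}}-\theta_{\hat n,0})H_{\hat n}$, i.e. $\lim_k\theta_{\hat n,k} = \theta_{\hat n,0} + (\theta_{\hat n,{\rm true}}-\theta_{\hat n,0})H_{\hat n} = \theta_{\hat n}^*$ by \eqref{eqn: theta star}, proving \eqref{eqn: higher-order model convergence WPE}.

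For the rate statement \eqref{eqn: higher-order model convergence PE}, I would keep the matrix-inversion-lemma expression and extract the $O(1/k)$ term: write $\theta_{\hat n,k} - \theta_{\hat n}^*$ exactly, using $\theta_{\hat n}^* - \theta_{\hat n,0} = (\theta_{\hat n,{\rm true}}-\theta_{\hat n,0})H_{\hat n}$, and show the difference equals $-(\theta_{\hat n,0}-\theta_{\hat n,{\rm true}})M_{n,\hat n}(R_k^{-1}+W_{n,\hat n})^{-1}\cdot\big[\text{correction}\big]\cdot M_{n,\hat n}^\rmT P_{\hat n,0}$ where the bracketed correction is of order $R_k^{-1}$. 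Then multiplying by $k$ and using persistent excitation, $\tfrac1k R_k = \tfrac1k\Psi_k\Psi_k^\rmT \to C_{n,\hat n}$ (note $\Psi_k$ lists $\phi_{n,\hat n,0},\ldots,\phi_{n,\hat n,k-1}$), so $k\,R_k^{-1} = (\tfrac1k R_k)^{-1}\to C_{n,\hat n}^{-1}$. Carefully tracking which factors survive, the $(R_k^{-1}+W_{n,\hat n})^{-1}$ factors each tend to $W_{n,\hat n}^{-1}$ and the single surviving $kR_k^{-1}$ produces the middle $C_{n,\hat n}^{-1}$, yielding $\lim_k k(\theta_{\hat n,k}-\theta_{\hat n}^*) = (\theta_{\hat n,0}-\theta_{\hat n,{\rm true}})M_{n,\hat n}W_{n,\hat n}^{-1}C_{n,\hat n}^{-1}W_{n,\hat n}^{-1}M_{n,\hat n}^\rmT P_{\hat n,0}$, as claimed.

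The main obstacle I expect is the bookkeeping in the singular-limit step: $M_{n,\hat n}$ is tall and has full column rank but is not square, so one cannot simply invert it, and the naive $(M_{n,\hat n}R_kM_{n,\hat n}^\rmT + P_{\hat n,0}^{-1})^{-1}$ does not factor nicely. The matrix inversion lemma (Lemma \ref{lem: matrix inversion lemma}, referenced earlier) is the right tool precisely because it moves the inverse onto the small, eventually-invertible matrix $R_k^{-1} + W_{n,\hat n}$; the delicate part is verifying that the leftover first-order term in $R_k^{-1}$ has exactly the form that multiplies out to the stated triple product after the $k\to\infty$ limit, and checking that $W_{n,\hat n} = M_{n,\hat n}^\rmT P_{\hat n,0}M_{n,\hat n}$ is invertible (it is, since $P_{\hat n,0}\succ 0$ and $M_{n,\hat n}$ has full column rank). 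I would also need to confirm $C_{n,\hat n}$ is invertible under persistent excitation, which is immediate from the definition, and that weak persistent excitation of $(\phi_{n,\hat n,k}^\rmT)$ indeed forces $\eigmin[\Psi_k\Psi_k^\rmT]\to\infty$, which is exactly its definition.
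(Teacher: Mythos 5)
Your proposal is correct and follows essentially the same route as the paper's proof: both start from the batch solution \eqref{eqn: batch IO ID}, use $\Phi_{\hat{n},k}=M_{n,\hat{n}}\bar{\Phi}_k$ from Lemma \ref{lem: phi = M*phi}, and apply the matrix inversion lemma to push the inverse onto the eventually-invertible matrix $(\bar{\Phi}_k\bar{\Phi}_k^\rmT)^{-1}+W_{n,\hat{n}}$ before taking the limit. The only difference is cosmetic bookkeeping (you subtract $\theta_{\hat{n},0}$ where the paper subtracts $\theta_{\hat{n},{\rm true}}$ and then $\theta_{\hat{n}}^*$), which lands on the same identities and the same limits.
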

\begin{proof}
     For brevity, denote $P_0 \triangleq P_{\hat{n},0}$, $\theta^* \triangleq \theta_{\hat{n}}^*$, $\theta_0 \triangleq \theta_{\hat{n},0}$, $\theta \triangleq \theta_{\hat{n},{\rm true}}$, $M \triangleq M_{n,\hat{n}}$, $H = H_{\hat{n}}$, $W \triangleq W_{n,\hat{n}}$, and, for all $k \ge 0$, $\theta_k \triangleq \theta_{\hat{n},k}$, $\Phi_k \triangleq {\Phi}_{\hat{n},k}$ and $\bar{\Phi}_{k} \triangleq \begin{bmatrix}
        \phi_{n,\hat{n},k-1} & \cdots & \phi_{n,\hat{n},0}
    \end{bmatrix}.$
    It follows from \eqref{eqn: batch IO ID}, \eqref{eqn: yk = thetatrue nhat phi}, and Lemma \ref{lem: phi = M*phi} that, for all $k \ge 0$,
    \begin{align*}
        \theta_{k+1} &= (\theta \Phi_k \Phi_k^\rmT + \theta_0 P_0^{-1} ) (\Phi_k \Phi_k^\rmT + P_0^{-1})^{-1}
        \\
        &= \theta + (\theta_0 - \theta) P_0^{-1} (\Phi_k \Phi_k^\rmT + P_0^{-1})^{-1}
        \\
        &= \theta + (\theta_0 - \theta) P_0^{-1} (M \bar{\Phi}_k \bar{\Phi}_k^\rmT M^\rmT + P_0^{-1})^{-1}.
    \end{align*}
    Subtracting both sides by $\theta^*$ and substituting \eqref{eqn: theta star} yields
    \begin{align}
    \label{eqn: higher-order model convergence temp 2}
        &\theta_{k+1} - \theta^* 
        \\
        & \quad = (\theta_0 - \theta) \big[-I +  P_0^{-1} (M \bar{\Phi}_k \bar{\Phi}_k^\rmT M^\rmT + P_0^{-1})^{-1} + H \big]. \nonumber
    \end{align}
    Since $(\phi_{n,\hat{n},k}^\rmT)_{k=0}^\infty$ is weakly persistently exciting, there exists $N$ such that, for all $k \ge N$, $\bar{\Phi}_k \bar{\Phi}_k^\rmT$ is nonsingular. 
    Then, it follows from Lemma \ref{lem: matrix inversion lemma} that, for all $k \ge N$,
    \begin{align*}
        (M \bar{\Phi}_k \bar{\Phi}_k^\rmT M^\rmT & + P^{-1})^{-1} 
        \\
        & = P_0 - P_0 M [(\bar{\Phi}_k \bar{\Phi}_k^\rmT)^{-1} + W]^{-1} M^\rmT P_0,
    \end{align*}
    and hence
    \begin{align}
        -I +  P_0^{-1} (M \bar{\Phi}_k & \bar{\Phi}_k^\rmT M^\rmT + P_0^{-1})^{-1} \nonumber
        \\
        & = - M [(\bar{\Phi}_k \bar{\Phi}_k^\rmT)^{-1} + W ]^{-1} M^\rmT P_0.
        \label{eqn: higher-order model convergence temp 4}
    \end{align}
    Substituting \eqref{eqn: higher-order model convergence temp 4} and $H = M(M^\rmT P_0 M)^{-1} M^\rmT P_0$ into \eqref{eqn: higher-order model convergence temp 2} implies that, for all $k \ge N$,
    \begin{align*}
        \theta_{k+1} &  - \theta^* 
        \\
       & = (\theta_0 - \theta) M \left[-[(\bar{\Phi}_k \bar{\Phi}_k^\rmT)^{-1} + W ]^{-1} + W^{-1} \right] M^\rmT P_0.
    \end{align*}

    Again applying Lemma \ref{lem: matrix inversion lemma}, it follows that 
    \begin{equation*}
        -[(\bar{\Phi}_k \bar{\Phi}_k^\rmT)^{-1} + W]^{-1}  + W^{-1} 
        =  W^{-1} [ \bar{\Phi}_k \bar{\Phi}_k^\rmT - W^{-1}]^{-1} W^{-1} 
    \end{equation*}
    which yields
    \begin{equation*}
        \theta_{k+1} - \theta^* = (\theta_0 - \theta) M W^{-1} [ \bar{\Phi}_k \bar{\Phi}_k^\rmT - W^{-1}]^{-1} W^{-1}  M^\rmT P_0.
    \end{equation*}
    Since $(\phi_{n,\hat{n},k}^\rmT)_{k=0}^\infty$ is weakly persistently exciting, it follows that 
    $\lim_{k \rightarrow \infty} \eigmin \left( \bar{\Phi}_k \bar{\Phi}_k^\rmT \right) = \infty$ and $\lim_{k \rightarrow \infty} \theta_{k+1} - \theta^* = 0$ and \eqref{eqn: higher-order model convergence WPE} follows.
    If additionally, $(\phi_{n,\hat{n},k}^\rmT)_{k=0}^\infty$ is persistently exciting, then
    \begin{align*}
        & \lim_{k \rightarrow \infty} k (\theta_{k+1} - \theta^*) 
        \\
        &= (\theta_0 - \theta) M W^{-1} [ \lim_{k \rightarrow \infty} \frac{1}{k}\bar{\Phi}_k \bar{\Phi}_k^\rmT - \frac{1}{k} W^{-1}]^{-1} W^{-1}  M^\rmT P_0.
    \end{align*}
    Since $\lim_{k \rightarrow \infty} \frac{1}{k}\bar{\Phi}_k \bar{\Phi}_k^\rmT - \frac{1}{k} W^{-1} = C_{n,\hat{n}}$, \eqref{eqn: higher-order model convergence PE} follows.
\end{proof}
Finally, since the input/output model \eqref{eqn: IO Model 2} with coefficients given by $\theta_{\hat{n}}^*$ is equivalent to \eqref{eqn: IO Model} and $\lim_{k \rightarrow \infty} {\theta}_{\hat{n},k} = \theta_{\hat{n}}^*$, an immediate corollary is that the residual error $z_{\hat{n},k}({\theta}_{\hat{n},k})$ converges to $0$, where $z_{\hat{n},k}$ is defined in \eqref{eqn: zk = y - theta phi}.
\begin{cor}
\label{cor: residual error approaches 0}
Consider the assumptions and notation of Proposition \ref{prop: PCAC RLS}.
If $(\phi_{n,\hat{n},k}^\rmT)_{k=0}^\infty$ is weakly persistently exciting, then 
    \begin{align}
        \lim_{k \rightarrow \infty} z_{\hat{n},k}({\theta}_{\hat{n},k}) = 0_{p \times 1}.
    \end{align}
\end{cor}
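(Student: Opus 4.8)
The plan is to reduce the claim to the convergence already proved in Theorem \ref{theo: higher-order model convergence}. First I would rewrite the residual as an estimation error relative to $\theta_{\hat n}^*$. Since the order-$\hat n$ model with coefficients $\theta_{\hat n}^*$ is equivalent to \eqref{eqn: IO Model}, Proposition \ref{prop: equivalent theta} gives $(\theta_{\hat n}^* - \theta_{\hat n,{\rm true}})M_{n,\hat n} = 0$; combined with $y_k = \theta_{\hat n,{\rm true}}\phi_{\hat n,k}$ from \eqref{eqn: yk = thetatrue nhat phi} and $\phi_{\hat n,k} = M_{n,\hat n}\phi_{n,\hat n,k}$ from Lemma \ref{lem: phi = M*phi}, this yields $y_k = \theta_{\hat n}^*\phi_{\hat n,k}$, so by \eqref{eqn: zk = y - theta phi} we obtain $z_{\hat n,k}(\theta_{\hat n,k}) = (\theta_{\hat n}^* - \theta_{\hat n,k})\phi_{\hat n,k}$. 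When $(\phi_{\hat n,k})_{k\ge 0}$ is bounded — the typical case for a stable system driven by bounded inputs — the corollary is then immediate from $\lim_{k\to\infty}\theta_{\hat n,k} = \theta_{\hat n}^*$.

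To handle the general case I would keep a closed form rather than pass to the limit. Combining the batch formula \eqref{eqn: batch IO ID} with $y_i = \theta_{\hat n,{\rm true}}\phi_{\hat n,i}$ and \eqref{eqn: IO Pinv update} gives $\theta_{\hat n,k} - \theta_{\hat n,{\rm true}} = (\theta_{\hat n,0} - \theta_{\hat n,{\rm true}})P_{\hat n,0}^{-1}P_{\hat n,k}$, hence $z_{\hat n,k}(\theta_{\hat n,k}) = (\theta_{\hat n,{\rm true}} - \theta_{\hat n,0})P_{\hat n,0}^{-1}P_{\hat n,k}\phi_{\hat n,k}$, so it suffices to prove $P_{\hat n,k}\phi_{\hat n,k} \to 0$. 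By \eqref{eqn: IO Pinv update} and Lemma \ref{lem: phi = M*phi}, $P_{\hat n,k}^{-1} = P_{\hat n,0}^{-1} + M_{n,\hat n}(\bar\Phi_k\bar\Phi_k^\rmT)M_{n,\hat n}^\rmT$ with $\bar\Phi_k \triangleq [\phi_{n,\hat n,k-1}\ \cdots\ \phi_{n,\hat n,0}]$, so that $P_{\hat n,k}$ collapses to zero on $\operatorname{range}(M_{n,\hat n})$, which contains every $\phi_{\hat n,k}$; explicitly, $P_{\hat n,k}\phi_{\hat n,k} = P_{\hat n,0}M_{n,\hat n}W_{n,\hat n}^{-1}(W_{n,\hat n}^{-1}+\bar\Phi_k\bar\Phi_k^\rmT)^{-1}\phi_{n,\hat n,k}$, and $\|(W_{n,\hat n}^{-1}+\bar\Phi_k\bar\Phi_k^\rmT)^{-1}\phi_{n,\hat n,k}\|^2 \le (\phi_{\hat n,k}^\rmT P_{\hat n,k}\phi_{\hat n,k})/\eigmin(\bar\Phi_k\bar\Phi_k^\rmT)$, whose denominator tends to infinity by weak persistent excitation.

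I expect the main obstacle to be the numerator $\phi_{\hat n,k}^\rmT P_{\hat n,k}\phi_{\hat n,k}$, the one-step prediction variance of the new regressor against the accumulated information matrix: it is not bounded in general, and controlling it — equivalently, bounding how far a fresh regressor can point out of the span of the past ones, relative to $\eigmin(\bar\Phi_k\bar\Phi_k^\rmT)$ — is the delicate step and is where a growth or boundedness hypothesis on $(\phi_{n,\hat n,k})$ must be invoked. A natural way to organize this is a rank-one-update (determinant) argument: $\prod_{i=0}^{k-1}\big(1+\phi_{n,\hat n,i}^\rmT(W_{n,\hat n}^{-1}+\bar\Phi_i\bar\Phi_i^\rmT)^{-1}\phi_{n,\hat n,i}\big)$ telescopes to $\det(W_{n,\hat n}^{-1}+\bar\Phi_k\bar\Phi_k^\rmT)/\det(W_{n,\hat n}^{-1})$, which is bounded above by a power of $\sum_i\|\phi_{n,\hat n,i}\|^2$ via the trace; once the data grows no faster than this permits, the prediction variances are pinned down and, together with $\eigmin(\bar\Phi_k\bar\Phi_k^\rmT) \to \infty$, the proof is finished.
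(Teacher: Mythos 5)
The paper gives no formal proof of this corollary: its entire justification is the sentence preceding the statement, which is precisely your first paragraph --- since $\theta^*_{\hat n}$ is equivalent to the true model, $y_k = \theta^*_{\hat n}\phi_{\hat n,k}$, hence $z_{\hat n,k}(\theta_{\hat n,k}) = (\theta^*_{\hat n} - \theta_{\hat n,k})\phi_{\hat n,k}$, and the claim is read off from $\theta_{\hat n,k}\to\theta^*_{\hat n}$ (Theorem \ref{theo: higher-order model convergence}). So in that sense your proposal takes the same route as the paper. The substantive point is the one you raise yourself: this ``immediate'' step is only valid if $\|\phi_{\hat n,k}\|$ is bounded, or more generally grows slower than $\|\theta_{\hat n,k}-\theta^*_{\hat n}\|$ decays, and weak persistent excitation guarantees neither. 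Your closed-form reduction makes the issue precise: the residual is controlled by a term of the form $[(W_{n,\hat n}^{-1}+\bar\Phi_k\bar\Phi_k^\rmT)]^{-1}\phi_{n,\hat n,k}$, whose norm is bounded by $\|\phi_{n,\hat n,k}\|/\eigmin(\bar\Phi_k\bar\Phi_k^\rmT)$; weak PE sends the denominator to infinity but says nothing about the numerator, which is a \emph{fresh} regressor not contained in $\bar\Phi_k$. Your log-determinant telescoping only bounds $\sum_k \phi_{n,\hat n,k}^\rmT(\cdot)^{-1}\phi_{n,\hat n,k}$ in an averaged sense, so it yields convergence of the prediction variances along a subsequence (or in a Ces\`aro sense), not pointwise --- exactly as you concede. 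So your proposal does not close the proof, but the gap it leaves open is a gap in the paper's own argument, not a defect introduced by you.

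Two remarks that would let you finish under a slightly strengthened hypothesis. First, under the full persistent excitation condition of Theorem \ref{theo: higher-order model convergence}, the existence of $C_{n,\hat n}=\lim_k \frac1k\sum_{i<k}\phi_{n,\hat n,i}\phi_{n,\hat n,i}^\rmT$ forces $\frac1k\phi_{n,\hat n,k-1}\phi_{n,\hat n,k-1}^\rmT\to 0$, i.e.\ $\|\phi_{n,\hat n,k}\|=o(\sqrt{k})$, while Theorem \ref{theo: higher-order model convergence} gives $\|\theta_{\hat n,k}-\theta^*_{\hat n}\|=O(1/k)$; their product is $o(k^{-1/2})$ and the corollary follows cleanly. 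Second, a small algebra note: pushing $P_{\hat n,k}$ through $M_{n,\hat n}$ gives $P_{\hat n,k}M_{n,\hat n} = P_{\hat n,0}M_{n,\hat n}(W_{n,\hat n}^{-1}+\bar\Phi_k\bar\Phi_k^\rmT)^{-1}W_{n,\hat n}^{-1}$, with the $W_{n,\hat n}^{-1}$ factor on the right of the inverse rather than the left as you wrote; this does not affect the conclusion, since the operative bound is still governed by $\eigmin(\bar\Phi_k\bar\Phi_k^\rmT)^{-1}\|\phi_{n,\hat n,k}\|$.
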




\section{Conclusions}

This work shows that when using RLS to identify an input/output model of a higher-order than that of the true input/output system, the identified coefficients still converge to predictable values, given weakly persistently exciting data.
In particular, we obtain the very natural result that the higher-order identified model converges to the model equivalent to the true system that minimizes the regularization term of RLS.

One outstanding question in our analysis is under what conditions the persistent excitation condition of Theorem \ref{theo: higher-order model convergence} is met.
This itself is a deep question with sufficient conditions developed in \cite{willems2005note} and a relaxation of those conditions presented in \cite{van2020willems}.
These conditions require that the input be persistently exciting of a sufficiently high order, see \cite{willems2005note} for further details.
One method to guarantee the input is persistently exciting of arbitrarily high order is to introduce random inputs in, for example, an $\varepsilon$-greedy control strategy \cite{watkins1989learning}.
Designing a suitable input that is persistently exciting of sufficiently high order relates to the classical dilemma of exploration and exploitation \cite{sutton2018reinforcement} and is beyond the scope of this work.

Another issue not addressed explicitly is the case where the order of the identified input/output model is lower than that of the true input/output system. 
One immediate result however, is that if the true system is irreducible, then the identified model cannot converge to an equivalent model, as no lower-order equivalent model exists.
As a result, the residual error in the RLS cost will not converge to zero, unlike in the higher-order case (see Corollary \ref{cor: residual error approaches 0}).

An important application of this work is the selection of a model order for online identification. 
Many criterion rules for model-order selection balance the trade-off of maximizing goodness of fit of the model (minimizing residual error in this case) and the simplicity of the model (minimizing model order in this case) \cite{stoica2004model}.
This work shows how model identification of different orders perform under idealized conditions, which may help inform model order selection in more realistic conditions.



\bibliographystyle{IEEEtran}
\bibliography{refs}

\appendix
\begin{lema}{A.1}
\label{lem: matrix inversion lemma}
Let $A \in \BBR^{n \times n}$, $U \in \BBR^{n \times p}$, $C \in \BBR^{p \times p}$, $V \in \BBR^{p \times n}$. Assume $A$, $C$, and $A+UCV$ are nonsingular. Then, $(A+UCV)^{-1} = A^{-1} - A^{-1}U(C^{-1} + VA^{-1} U)^{-1} V A^{-1}$.
\end{lema}



\begin{lema}{A.2}
\label{lem: quadratic minimization with affine constraints}
Consider the quadratic minimization problem with affine constraints
\begin{equation}
\label{eqn: quad prog}
\begin{aligned}
    \min_{x \in \BBR^n} &\frac{1}{2} x^\rmT P x + q^\rmT x + r, \\
    \textnormal{such that } &A x = b,
\end{aligned}   
\end{equation}

where $P \in \BBR^{n \times n}$ is positive definite and $A\in \BBR^{p \times n}$. Then, \eqref{eqn: quad prog} has a unique solution $(x^*,\nu^*)$ given by the system
\begin{align}
    A x^* = b, \quad Px^*  + A^\rmT \nu^* = -q.
\end{align}
%
\end{lema}
\begin{proof}
    See section 10.1.1 of \cite{boyd2004convex}.
\end{proof}

\end{document}